\documentclass[10pt]{article}
\usepackage{epsfig,graphics,graphicx,subfigure,color}
\usepackage{amsfonts,amssymb,amsthm,amsmath,amsbsy,mathtools,cuted,kotex}

\setlength{\textwidth}{6.5in}
\setlength{\topmargin}{0pt}
\setlength{\voffset}{-.25in}
\setlength{\evensidemargin}{0pt}
\setlength{\oddsidemargin}{0pt}
\setlength{\hoffset}{0pt}
\setlength{\textheight}{600pt}


\newtheorem{theorem}{Theorem}
\newtheorem{corollary}{Corollary}
\newtheorem{lem}[theorem]{Lemma}
\def\Bbb R{{\rm \bf R}}

\def\gathered{\begin{array}{c}}
\def\endgathered{\end{array}}
\def\text{\mbox}

\begin{document}

\title {Three range measurements with multiplicative noises for single source localization problem}
\author{Kiwoon KWON\footnote{ Department of Mathematics, Dongguk University-Seoul, 100715 Seoul, South Korea.
e-mail address:
kwkwon@dongguk.edu}
}
\maketitle

\begin{abstract}
This purpose of this paper is to locate a single localized source from three range measurements with multiplicative noises.
Although some minimization approaches for additive noise have been found, studies on the existence of solutions are rare.
We analyzed a situation with one or two solutions for the same multiplicative noise at three measurement sensors.  
A strategy for finding the best localized source when there are no solutions for the same multilicative noise is suggested that involves adjusting the multiplicative noise ratio.  The numerical simulation is conducted for three randomly generated measurement locations and their distances to the source. 
\noindent

\end{abstract}

\section{Introduction}
We consider the problem of locating a single radiating source from multiplicaive noisy range measurements collected using a network of passive sensors. This problem is relavant to Global Positioning Systems(GPS), wireless communications, surveillance etc. \cite{BS}. Most of the applications use many measurements, and only three measurements are used in the inverse scattering problem for phaseless far field patterns \cite{XXB} and the references therein.  Only three measurements cases will be examined in this paper.

Consider an array of $m$ sensors, and let $z_j\in \mathbb R^n$ denote the coordinate of the $j$-th sensor for $j=1,2,3$. Let $x\in \mathbb R^n$ denote the unknown source's coordinate vector, and let $d_j^0>0$ be a noisy observation of the range between the source and the $j$-th sensor:
\begin{equation}\label{eq:main}
                         d_j^0 = ||x-z_j||(1+\epsilon_j) , j=1,2,3.
 \end{equation}
This formulation differs from traditional additive noise formulation in \cite{ BTC,BP,BS}. If $n\ge 3$ and the source is not located on the plane defined by $z_1,z_2,$ and $z_3$, we can only approximate $x$ as the orthogonal projection to this plane. Thus, let us assume in this paper that the source location is on the two-dimensional plane defined by $z_1,z_2,$ and $z_3$. Without a loss of generality, let us assume  $d_3^0\ge d_2^0 \ge d_1^0$.  

Let us consider the multiplicaitve noise $\epsilon$. If $\epsilon$ is known exactly, $x$ can be defined without difficulty. However, it is  generally not exactly known due to the possible ill-posed nature of  the measurements. Let us assume that the devices and surrounding envirionments for the three measurement sensors are similar; that is to say, that $\epsilon$ is the solution to the following minimization problem:
\begin{equation}\label{eq:same}
 \epsilon = argmin_{(\epsilon_1,\epsilon_2,\epsilon_3)}[|\epsilon_3-\epsilon_2|+|\epsilon_3-\epsilon_1|].
 \end{equation} 
If there is a corresponding source $x$, the best candidate for $\epsilon$ is $ \epsilon_1 = \epsilon_2 = \epsilon_3$. In general, various minimization techniques can be used for these problems. However, in this paper, we will identify the exact source location $x$ and related noise $\epsilon$ when the solution exists. In addition, we proposed an approximation strategy for finding the exact source location $x$ by controlling $\epsilon$  to minimize the objective function in (\ref{eq:same}). 

In section 2, the case in which there exists a source location $x$ satisfying (\ref{eq:main}) when $\epsilon_1=\epsilon_2=\epsilon_3$ will be discussed. In section 3, an approximation strategy for when there does not exist $x$ satisfying (\ref{eq:main}) is proposed. Theorems supporting the strategies are also stated and proved, and numerical examples for all cases controlling $\epsilon$ are explained. The error for finding $x$ for randomly generated measurement data added by multiplicative noise is also investigated with increasing  multiplicative noise.

\section{The existence of source}
Let us denote the followings for $i,j=1,2,3$:
\begin{enumerate}
\item{$d_j= \frac{d^0_j}{1+\epsilon_j}$}
\item{$z_{ij} = z_i - z_j,$}
\item{$d_{ij}= ||z_{ij}||$}
\item{$b_{ij}^{in} = \frac{d_j}{d_j + d_i} z_i + \frac{d_i}{d_i+d_j} z_j$}
\item{$b_{ij}^{out} = \frac{d_j}{d_j - d_i} z_i + \frac{d_i}{d_i-d_j} z_j$ only when $d_i \neq d_j$}
\item{$b_{ij} = \frac{d_j^2}{d_j^2 - d_i^2} z_i + \frac{d_i^2}{d_i^2-d_j^2} z_j$ only when $d_i \neq d_j$}
\item{$r_{ij} = \frac{d_i d_j d_{ij}}{|d_i^2 - d_j^2|} $ only when $d_i \neq d_j$}
\end{enumerate}
Here, $b_{ij}^{in}$ and $b_{ij}^{out}$ are respectively the internally and externally dividing points with ratios $d_i$ and $d_j$ between two points $z_i$ and $z_j$. And $b_{ij}$ and $r_{ij}$ are the center and radius, respectively, of  the correspoding Apollonius circle. 
Without a loss of generality, let us assume 
$$   d_3 \ge d_2 \ge d_1 >0. $$

\begin{theorem}\label{th:th1}
Suppose that $\epsilon_1=\epsilon_2=\epsilon_3$. 
Then, there is a unique solution for (\ref{eq:main}) if and only if one of the following conditions hold:
 \begin{eqnarray}\label{eq:th1abc}
d_3 = d_2 = d_1 &:&  z_1, z_2, z_3 \mbox{ are not colinear, } \label{eq:th1a}\\
 d_3 = d_2 > d_1&:&    \frac{|z_{23} \cdot (b_{13}-b_{23}^{in})|}{d_{23}}   =   r_{13},\label{eq:th1b}\\
d_3 > d_2 \ge d_1 &:&   |b_{23}-b_{13}| = r_{23} + r_{13} \mbox{ or } |b_{23}-b_{13}| =  |r_{23}-r_{13}|. \label{eq:th1c}
\end{eqnarray}
And there are exactly two solutions  for (\ref{eq:main}) if and only if one of the following conditions hold
 \begin{eqnarray}\label{eq:th1de}
 d_3 = d_2 > d_1&:&    \frac{|z_{23} \cdot (b_{13}-b_{23}^{in})|}{d_{23}}   <   r_{13}\label{eq:th1d}\\
d_3 > d_2 \ge d_1 &:&    |r_{23}-r_{13}| <  |b_{23}-b_{13}|  < r_{23} + r_{13}\label{eq:th1e}
\end{eqnarray}
And there is no solution for (\ref{eq:main}) if and only if one of the following conditions hold
 \begin{eqnarray}\label{eq:th1fg}
 d_3 = d_2 > d_1&:&    \frac{|z_{23} \cdot (b_{13}-b_{23}^{in})|}{d_{23}}   >   r_{13},\label{eq:th1f}\\
d_3 > d_2 \ge d_1 &:&   |b_{23}-b_{13}|<  |r_{23}-r_{13}| ,\label{eq:th1g}\\
d_3 > d_2 \ge d_1 &:&   |b_{23}-b_{13}| >  r_{23} + r_{13}.\label{eq:th1h}
\end{eqnarray}
 \end{theorem}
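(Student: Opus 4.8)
The plan is to reduce the existence question to the intersection of two classical loci and then read off the three counts from elementary line/circle geometry. First I would make the reduction explicit. Under $\epsilon_1=\epsilon_2=\epsilon_3=\epsilon$, equation (\ref{eq:main}) asks for an $x$ and a single scale $1+\epsilon>0$ with $\|x-z_j\|=d_j^0/(1+\epsilon)=d_j$ for $j=1,2,3$. Since $d_j^0>0$, any such $x$ satisfies $x\neq z_j$, and conversely an $x$ realizing the prescribed distance ratios determines $1+\epsilon=d_j^0/\|x-z_j\|>0$ uniquely; hence counting solutions of (\ref{eq:main}) is exactly counting points $x$ with $\|x-z_i\|/\|x-z_j\|=d_i/d_j$ for all pairs. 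I would also record the scale-invariance of the data: the quantities $b_{ij}$, $r_{ij}$ and $b_{ij}^{in}$ are unchanged under a common positive rescaling $d_j\mapsto\lambda d_j$, so the conditions in the statement depend only on the observed ratios $d_1^0:d_2^0:d_3^0$ and on the $z_j$, and $\epsilon$ itself need not be known. Finally, because any two of the three ratio identities imply the third, it suffices to intersect the loci of two conveniently chosen pairs.

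Next I would identify each locus. For a pair $(i,j)$ the identity $\|x-z_i\|/\|x-z_j\|=d_i/d_j$, squared, reads $d_j^2\|x-z_i\|^2=d_i^2\|x-z_j\|^2$. When $d_i=d_j$ this is the perpendicular bisector of $z_iz_j$; when $d_i\neq d_j$, completing the square gives the circle centered at $b_{ij}$ of radius $r_{ij}$, i.e. the Apollonius circle already named in the definitions, which I would verify by the one-line expansion.

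Then comes the case analysis, choosing the pairs so that a degenerate straight-line locus is isolated. If $d_1=d_2=d_3$, the two chosen loci are the perpendicular bisectors of two sides of the triangle $z_1z_2z_3$; they meet in the unique circumcenter when $z_1,z_2,z_3$ are non-collinear and are parallel and disjoint when the points are collinear, which yields (\ref{eq:th1a}). If $d_3=d_2>d_1$, I would intersect the perpendicular bisector of $z_2z_3$, which passes through the midpoint $b_{23}^{in}$ and has normal direction $z_{23}$, with the Apollonius circle $(b_{13},r_{13})$; the trichotomy tangent/secant/disjoint is governed by comparing the point-to-line distance $|z_{23}\cdot(b_{13}-b_{23}^{in})|/d_{23}$ with $r_{13}$, giving (\ref{eq:th1b}), (\ref{eq:th1d}), (\ref{eq:th1f}). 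If $d_3>d_2\ge d_1$, both loci are genuine circles (note $d_3>d_1$ and $d_3>d_2$), and the standard two-circle classification via $|b_{23}-b_{13}|$ against $r_{23}+r_{13}$ and $|r_{23}-r_{13}|$ produces the remaining cases.

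The parts needing care, rather than raw computation, are twofold. First, the trichotomies are only valid once the two loci are known to be distinct; I would rule out coincident Apollonius circles by noting their centers lie on the distinct sensor lines $z_1z_3$ and $z_2z_3$, which meet only at $z_3$, whereas $b_{i3}\neq z_3$ because $z_i\neq z_3$. Second, I must make each boundary equality tight, matching tangency exactly with the single-solution case and the strict inequalities with the two- or zero-solution cases, so that the stated biconditionals hold. I expect the distance identity in the $d_2=d_3$ case to be the fiddliest step, since it relies on recognizing $b_{23}^{in}$ as the midpoint of $z_2z_3$ and $z_{23}$ as the normal to its perpendicular bisector.
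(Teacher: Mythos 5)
Your proposal follows essentially the same route as the paper: reduce the common-$\epsilon$ problem to the pairwise distance-ratio loci (perpendicular bisectors when $d_i=d_j$, Apollonius circles with center $b_{ij}$ and radius $r_{ij}$ otherwise) and read off the solution count from the standard line--circle and circle--circle intersection trichotomies in the three cases $d_1=d_2=d_3$, $d_3=d_2>d_1$, and $d_3>d_2\ge d_1$. You are in fact somewhat more careful than the paper on two points it leaves implicit --- the scale-invariance that makes the conditions checkable without knowing $\epsilon$, and the need to rule out coincident loci --- though your non-coincidence argument via the sensor lines would still need a separate word for collinear $z_1,z_2,z_3$.
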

 \begin{proof}
 If $z_1,z_2,$ and $z_3$ are not colinear, they make a triangle. Further, $d_3=d_2=d_1$ means that the solution is located at 
 the intersection of three lines that are perpendicular to each side passing through the midpoint of the side. The existence of 
 the solution is exactly the circumcenter. If $z_1,z_2,$ and $z_3$ are colinear and $d_3=d_2=d_1$, it can be easily verified that there is no solution to (\ref{eq:main}). This proves that the noncollinearity is an  equivalent condition of the  unique solution of (\ref{eq:main}) when $d_3=d_2=d_1$.\\
 If $d_3 = d_2 > d_1$, then the solution lies at the perpendicular line bisecting the side between $z_2$ and $z_3$
 and the Apollonius circle with radius $r_{13}$ centered at $b_{13}$. The distance from point $b_{13}$ 
 to the perpendicular line bisecting the side $\overline{z_2 z_3}$  is 
 $$   D =   \frac{|z_{23} \cdot (b_{13}-b_{23}^{in})|}{d_{23}}. $$
 Thus, the existence of the solution for (\ref{eq:main}) is equivalent to $D\le r_{13}$. In addition, there is a unique solution if and only if $D=r_{13}$, while there exist exactly two solutions if and only if $D<r_{13}$.\\
 If $d_3 > d_2 \ge d_1$, the existence of the solution for (\ref{eq:main}) is equivalent to the existence of the meeting points of two Apollonius's circles for the two sides $\overline{z_2 z_3}$ and $\overline{z_1 z_3}$.  If   $|b_{23}-b_{13}| = r_{23} + r_{13}$, the two circles circumscribe and if $|b_{23}-b_{13}| = |r_{23} - r_{13}|$, the two circles inscribe.  If   $|r_{23}-r_{13}| <  |b_{23}-b_{13}|  < r_{23} + r_{13}$, the two circles meet at two points. These prove conditions (\ref{eq:th1c}), (\ref{eq:th1e}), (\ref{eq:th1g}), and (\ref{eq:th1h}) . 
  \end{proof}

\begin{corollary}\label{co:th1}
If $\epsilon$ is exactly known, one of the following conditions hold:
 \begin{eqnarray}\label{eq:coth1}
d_3 = d_2 = d_1 &:&  z_1, z_2, z_3 \mbox{ are not colinear. } \label{eq:coth1a}\\
 d_3 = d_2 > d_1&:&    \frac{|z_{23} \cdot (b_{13}-b_{23}^{in})|}{d_{23}}   \le   r_{13}\label{eq:coth1b}\\
d_3 > d_2 \ge d_1 &:&    |r_{23}-r_{13}| \le  |b_{23}-b_{13}|  \le r_{23} + r_{13}.\label{eq:coth1c}
\end{eqnarray}
 \end{corollary}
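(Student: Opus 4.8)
The plan is to derive the corollary as an immediate consequence of Theorem~\ref{th:th1}. First I would unpack the hypothesis: knowing $\epsilon$ exactly means each $\epsilon_j$ equals the true noise factor, so that $d_j = d_j^0/(1+\epsilon_j) = \|x - z_j\|$ for the genuine source $x$. Hence $x$ itself satisfies (\ref{eq:main}), and in particular the solution set of (\ref{eq:main}) is non-empty. This is the \emph{only} place the assumption ``$\epsilon$ is exactly known'' enters — it serves purely to guarantee existence of at least one solution.

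Next I would invoke Theorem~\ref{th:th1} under the standing assumption $d_3 \ge d_2 \ge d_1 > 0$, which partitions the situation into the three exhaustive cases $d_3 = d_2 = d_1$, $d_3 = d_2 > d_1$, and $d_3 > d_2 \ge d_1$. Since a solution exists, none of the ``no solution'' conditions (\ref{eq:th1f})--(\ref{eq:th1h}) can hold, so within each case one of the ``unique solution'' or ``two solutions'' alternatives must hold. Taking the union of these alternatives case by case reproduces the corollary. In the equal-distance case the union is simply non-collinearity, giving (\ref{eq:coth1a}). In the case $d_3 = d_2 > d_1$, writing $D = |z_{23}\cdot(b_{13}-b_{23}^{in})|/d_{23}$, the union of the unique-solution condition $D = r_{13}$ and the double-solution condition $D < r_{13}$ is the non-strict inequality $D \le r_{13}$, giving (\ref{eq:coth1b}). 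In the case $d_3 > d_2 \ge d_1$, the union of the two tangency equalities $|b_{23}-b_{13}| = r_{23}+r_{13}$, $|b_{23}-b_{13}| = |r_{23}-r_{13}|$ with the transversal inequality $|r_{23}-r_{13}| < |b_{23}-b_{13}| < r_{23}+r_{13}$ collapses to the closed double inequality $|r_{23}-r_{13}| \le |b_{23}-b_{13}| \le r_{23}+r_{13}$, giving (\ref{eq:coth1c}).

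The routine portion is merely checking that these unions are exactly the stated non-strict inequalities, i.e. that the boundary equalities from the unique-solution branch and the open inequalities from the double-solution branch together fill the closed interval without gaps. The single conceptual point — and the only possible obstacle — is recognizing that the corollary is precisely the contrapositive of the ``no solution'' characterization in Theorem~\ref{th:th1}: existence of a solution is equivalent to the disjunction of the unique- and double-solution conditions, which is exactly the complement of (\ref{eq:th1f})--(\ref{eq:th1h}) within each distance regime. Once existence of $x$ is secured from the exactness of $\epsilon$, nothing further is required.
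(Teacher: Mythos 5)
Your proposal is correct and is essentially the paper's own (implicit) argument: the paper states the corollary without proof as an immediate consequence of Theorem~\ref{th:th1}, and your derivation --- exact knowledge of $\epsilon$ forces $d_j=\|x-z_j\|$ for the true source, so a solution exists and the ``no solution'' branches are excluded, leaving the unions of the unique- and two-solution conditions --- is precisely that intended reasoning. The only point worth noting is that Theorem~\ref{th:th1} is stated under the hypothesis $\epsilon_1=\epsilon_2=\epsilon_3$ while the corollary is not, but since the theorem's conditions depend only on the values $d_j$ and $z_j$, its characterization applies verbatim and your argument goes through.
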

\section{Approximation strategy}
Given $z_j, d_j, j=1,2,3$ and $\epsilon_1=\epsilon_2=\epsilon_3$, if one of conditions (\ref{eq:th1a}), (\ref{eq:th1b}), and (\ref{eq:th1c}) hold, then we can find the unique solution $x$ for (\ref{eq:main}) using Theorem \ref{th:th1}. If one of the two conditions (\ref{eq:th1d}) and (\ref{eq:th1e}) hold, there are two solutions for (\ref{eq:main}); let these two solutions be $x_1$ and $x_2$. Then we can choose 
$$ x=argmin_{x_i, i=1,2} \sum_{j=1,2,3} | ||z_j-x_i||(1+\epsilon_j)-d_j^0| 
. $$

If one of the three conditions (\ref{eq:th1f}), (\ref{eq:th1g}), and (\ref{eq:th1h}) hold, the assumption $\epsilon_1=\epsilon_2=\epsilon_3$ no longer holds, and we should change $\epsilon$ to have a solution. We will find a solution $x$ along with $\epsilon$ of the minimization problem such as 
$$    \min_{\epsilon=\epsilon(x)} [|\epsilon_3-\epsilon_2|+|\epsilon_3-\epsilon_1|]= \min_{x=x(\epsilon)}\left[ \left|\frac{d_3}{|z_3-x|} -   \frac{d_2}{|z_2-x|}\right| +  \left| \frac{d_3}{|z_3-x|} -   \frac{d_1}{|z_1-x|}\right|\right].$$ 
Rather than attempting to directly solve this minimization problem using known minimization methods, we will try another strategy to obtain a solution by controlling $\epsilon$, which results in $d$ changing accordingly.
\begin{itemize}
\item{ The order $d_3\ge d_2 \ge d_1$ should be fixed. }
\item{ The value $d_3$ is fixed and $d_2$ and/or $d_1$  increase(s).}
\item{ The values by which $d_2$ and $d_1$ increase should be as small as possible.}
\end{itemize}
Bearing in mind that $d_2$ and $d_1$ increase into $k_2d_2$ and $k_1d_1$ for $1\le  k_2 \le \frac{d_3}{d_2}$ and $1\le k_1\le \frac{k_2d_2}{d_1} $, let us define the followings: 
\begin{enumerate}
\item{$b_{ij}^{in} (k_i,k_j)= \frac{k_jd_j}{k_jd_j + k_id_i} z_i + \frac{k_id_i}{k_id_i+k_jd_j} z_j$}
\item{$b_{ij}^{out}(k_i,k_j) = \frac{k_jd_j}{k_jd_j - k_id_i} z_i + \frac{k_id_i}{k_id_i-k_jd_j} z_j$ only when $k_id_i \neq k_jd_j$}
\item{$b_{ij}(k_i,k_j) = \frac{k_j^2d_j^2}{k_j^2d_j^2 -k_i^2 d_i^2} z_i + \frac{k_i^2d_i^2}{k_i^2d_i^2-k_j^2d_j^2} z_j$ 
only when $k_id_i \neq k_jd_j$}
\item{$r_{ij} (k_i,k_j)= \frac{k_ik_jd_i d_j d_{ij}}{|k_i^2d_i^2 - k_j^2d_j^2|} $ only when $k_id_i \neq k_jd_j$}
\end{enumerate}

If condition (\ref{eq:th1f}) holds, there exists no solution if $\epsilon_1 = \epsilon_2 = \epsilon_3$. By the above strategy, we should increase $d_1$ upto $d_2$. Let 
$$D(k) := |z_{23} \cdot (b_{13}(k,1)-b_{23}^{in})|/d_{23}   -  r_{13}(k,1). $$ 
 Then, (\ref{eq:th1f})  implies that  $D(1)>0$. If there is a $k$ such that $1<k\le \frac{d_2}{d_1}$ and $D(k)\le 0$, then there is  a solution $x$ for (\ref{eq:main}) with $d_3=d_2>kd_1$. Let  $k_D = d_2/d_1$, 
\begin{theorem}\label{th:th2}
Suppose $ d_3 = d_2 > d_1$ and $ D(1)>0 $.  Then, $D(k)$ is a decreasing function for $k\in[1,k_D)$ and there is a $k\in[1,k_D)$ such that $D(k)=0$ and $d_3 = d_2 > kd_1$. 
\end{theorem}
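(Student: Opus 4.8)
The plan is to reparametrize the one-parameter family of Apollonius circles by a single scalar that linearizes the center and turns the radius into an elementary function, after which both the monotonicity and the existence of a root become one-variable calculus. Concretely, for $k\in[1,k_D)$ set
$$ c = c(k) = \frac{k^2 d_1^2}{d_3^2 - k^2 d_1^2}, $$
which, since $kd_1 < d_3$ on this range, is a smooth strictly increasing bijection of $[1,k_D)$ onto $[c_1,\infty)$ with $c_1 = d_1^2/(d_3^2-d_1^2)$. A direct computation from the definitions gives the clean forms $b_{13}(k,1) = z_1 + c\,z_{13}$ and $r_{13}(k,1) = d_{13}\sqrt{c(1+c)}$. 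Since $d_2=d_3$ forces $b_{23}^{in}$ to be the midpoint of $\overline{z_2z_3}$, writing $A = z_{23}\cdot(z_1 - b_{23}^{in})$ and $B = z_{23}\cdot z_{13}$ collapses the whole expression to
$$ D(k) = \tilde D(c) := \frac{|A + cB|}{d_{23}} - d_{13}\sqrt{c(1+c)}. $$

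For the monotonicity I would differentiate $\tilde D$. On any interval where $A+cB$ keeps a fixed sign the distance term is affine with slope $\pm B/d_{23}$, whose magnitude is at most $|z_{23}\cdot z_{13}|/|z_{23}| \le |z_{13}| = d_{13}$ by Cauchy--Schwarz. The radius term satisfies $\frac{d}{dc}\sqrt{c(1+c)} = \frac{2c+1}{2\sqrt{c(1+c)}} > 1$ for every $c>0$ (because $(2c+1)^2 = 4c(1+c)+1$), so its contribution to $\tilde D'$ is strictly less than $-d_{13}$. Hence $\tilde D'(c) \le \frac{|B|}{d_{23}} - d_{13}\frac{2c+1}{2\sqrt{c(1+c)}} < d_{13} - d_{13} = 0$ wherever the derivative exists; at any point where $A+cB=0$ the one-sided slopes both obey the same bound, so the continuous function $\tilde D$ is strictly decreasing on $[c_1,\infty)$. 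Because $c(k)$ is increasing in $k$, this shows $D(k)$ is strictly decreasing on $[1,k_D)$.

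For the existence of a zero I would use the intermediate value theorem. By hypothesis $D(1) = \tilde D(c_1) > 0$. As $k \uparrow k_D$ we have $c \to \infty$, and the Apollonius circle degenerates into the perpendicular bisector of $\overline{z_1 z_3}$; analytically $\tilde D(c) = \big(\tfrac{|B|}{d_{23}} - d_{13}\big)c + O(1)$. The crucial point is that the three sensors form a genuine triangle, so $z_{13}$ and $z_{23}$ are not parallel and the Cauchy--Schwarz inequality is strict, giving $|B|/d_{23} < d_{13}$; the leading coefficient is therefore negative and $\tilde D(c)\to -\infty$. Since $\tilde D$ is continuous, strictly decreasing, positive at $c_1$, and tends to $-\infty$, it has a unique root $c^\ast \in (c_1,\infty)$, corresponding to a unique $k\in(1,k_D)$ with $D(k)=0$; as $k<k_D$ we have $kd_1 < d_2 = d_3$, which is the claimed ordering.

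The step I expect to be the main obstacle is the existence of the root rather than the monotonicity. Monotonicity follows from the robust inequality $\frac{d}{dc}\sqrt{c(1+c)}>1\ge |\cos\angle(z_{13},z_{23})|$, but the conclusion that $\tilde D$ actually crosses zero rests entirely on the non-collinearity of $z_1,z_2,z_3$. If the sensors were collinear the leading coefficient would vanish, $\tilde D$ would tend to a finite limit that may remain positive, and no root need exist; so the argument must invoke (and the statement tacitly assumes) that the three measurement points are not collinear, exactly as in the triangle hypothesis used for (\ref{eq:th1a}). Some care is also needed at the absolute-value kink, but that is dispatched by the one-sided derivative bound above.
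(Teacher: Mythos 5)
Your proof is correct, but it takes a genuinely different route from the paper's. The paper argues synthetically: monotonicity of $D$ is obtained by comparing two nested Apollonius circles in a labelled figure (the identity $D(k_2)-D(k_1)=|CD|(|\cos\theta|-1)-|AB|<0$), and existence of a root is obtained by a limiting argument showing $b_{13}^{in}(k,1)\to\frac{z_1+z_3}{2}$ so that the Apollonius circle along $\overline{z_1z_3}$ degenerates to the perpendicular bisector of that side, which must meet the perpendicular bisector of $\overline{z_2z_3}$. You instead substitute $c=k^2d_1^2/(d_3^2-k^2d_1^2)$ to get the closed form $\tilde D(c)=|A+cB|/d_{23}-d_{13}\sqrt{c(1+c)}$ and reduce everything to one-variable calculus: the slope bound $|B|/d_{23}\le d_{13}$ (Cauchy--Schwarz) against $\frac{d}{dc}\sqrt{c(1+c)}>1$ gives strict monotonicity, and the asymptotics $\tilde D(c)=(|B|/d_{23}-d_{13})c+O(1)\to-\infty$ plus the intermediate value theorem give the root. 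Your version buys several things the paper's does not: it is independent of the figure's configuration (the paper's step $|DG|-|CF|=|CD||\cos\theta|$ is only justified by inspection of Fig.~3(a)), it handles the absolute-value kink explicitly, it yields uniqueness of the root for free, and it isolates exactly where non-collinearity enters --- the strictness of Cauchy--Schwarz --- whereas the paper uses the same hypothesis implicitly when asserting that the two perpendicular bisectors meet. The paper's argument, in exchange, is shorter and makes the geometric mechanism (circles shrinking toward a bisector) visible. Both proofs silently assume $z_1,z_2,z_3$ are not collinear; your closing remark that the theorem fails without this (since then $\tilde D$ tends to a finite, possibly positive, limit) is a worthwhile observation that the paper does not make.
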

\begin{proof}
 Let  $1\le k_1 < k_2 < k_D$ and  let the blue and red circles in Fig. \ref{fig:Dk}(a) respectively correspond to the Apollonius circles for $k_1$ and $k_2$. Let $\theta$ be the angle $\angle z_1 z_3 z_2$ and assume that this angle is acute. Even though Fig. \ref{fig:Dk}(a) is for acute angle $\theta$, the following points are also true for an obtuse angle by using $|\cos\theta|$.  Under these assumptions, we have 
\begin{eqnarray}
D(k_2) - D(k_1) &=& [|DG|-|DA|]-[|CF|-|CB|]
\nonumber\\
                               &=& [|DG|-|CF|] - [|DA|-|CB|]
 \nonumber\\                              
                               &=&  |CD||\cos\theta| - [|AB| + |CD|] 
 \nonumber\\
                                &=&  |CD|(|\cos\theta|-1) - |AB| < 0.
\end{eqnarray}
Therefore, $D(k)$ is a decreasing function on $[1,k_D)$.
\\
Note that
$$b_{13}(k,1) = \frac{d_3^2}{d_3^2 -k^2 d_1^2} z_1 + \frac{k^2d_1^2}{k^2 d_1^2 - d_3^2} z_3 =  \frac 1{1 -k^2 (d_1/d_3)^2}  z_{13}    +    z_3$$
$$|b_{13}(k,1) - \frac 1 2(z_1+z_3)| 
 =   \frac {1+k^2(d_1/d_3)^2}{2(1 -k^2 (d_1/d_3)^2)}  d_{13} $$
and
$$ r_{13}(k,1) = \frac{k (d_1/d_3) d_{13}}{1 - k^2( d_1/d_3)^2}  = |b_{13}(k,1) - b_{13}^{in}(k,1)|.$$
Since $b_{13}(k,1), b_{13}^{in}(k,1),$ and $\frac 1 2(z_1+z_3)$ lie on the same line, we have
$$\left| \frac{z_1+z_3} 2  - b_{13}^{in}(k,1)\right| =  \left| b_{13}(k,1) - \frac {z_1+z_3} 2\right| - |b_{13}(k,1) - b_{13}^{in}(k,1)| = \frac {1-k (d_1/d_3)}{1+k(d_1/d_3)}d_{13}. $$
Therefore $k$ is less than and sufficiently close to $k_D$ and $b_{13}^{in}(k,1)$ approaches $\frac{z_1+z_3} 2$; that is to say, $b_{13}^{in}(k_D-0,1)=\frac{z_1+z_3}2$. 
 
Thus, the Apollonius circle on the side $\overline{z_1 z_3}$ goes to the perpendicular line to $\overline{z_1 z_3}$ at $\frac{z_1+z_3}2$ as $k$ goes to $k_D-0$. Since the two perpendicular lines with respect to sides $\overline{z_2 z_3}$ and $\overline{z_1 z_3}$ meets, there is an Apollonius circle depending on $k\in[1,k_D)$ along $\overline{z_1 z_3}$ meeting the perpendicular line with respect to $\overline{z_2 z_3}$ as shown in Fig. \ref{fig:Dk}(b). 
\end{proof}
\begin{figure}
\begin{minipage}[t]{8cm}
\centerline{\epsfig{file=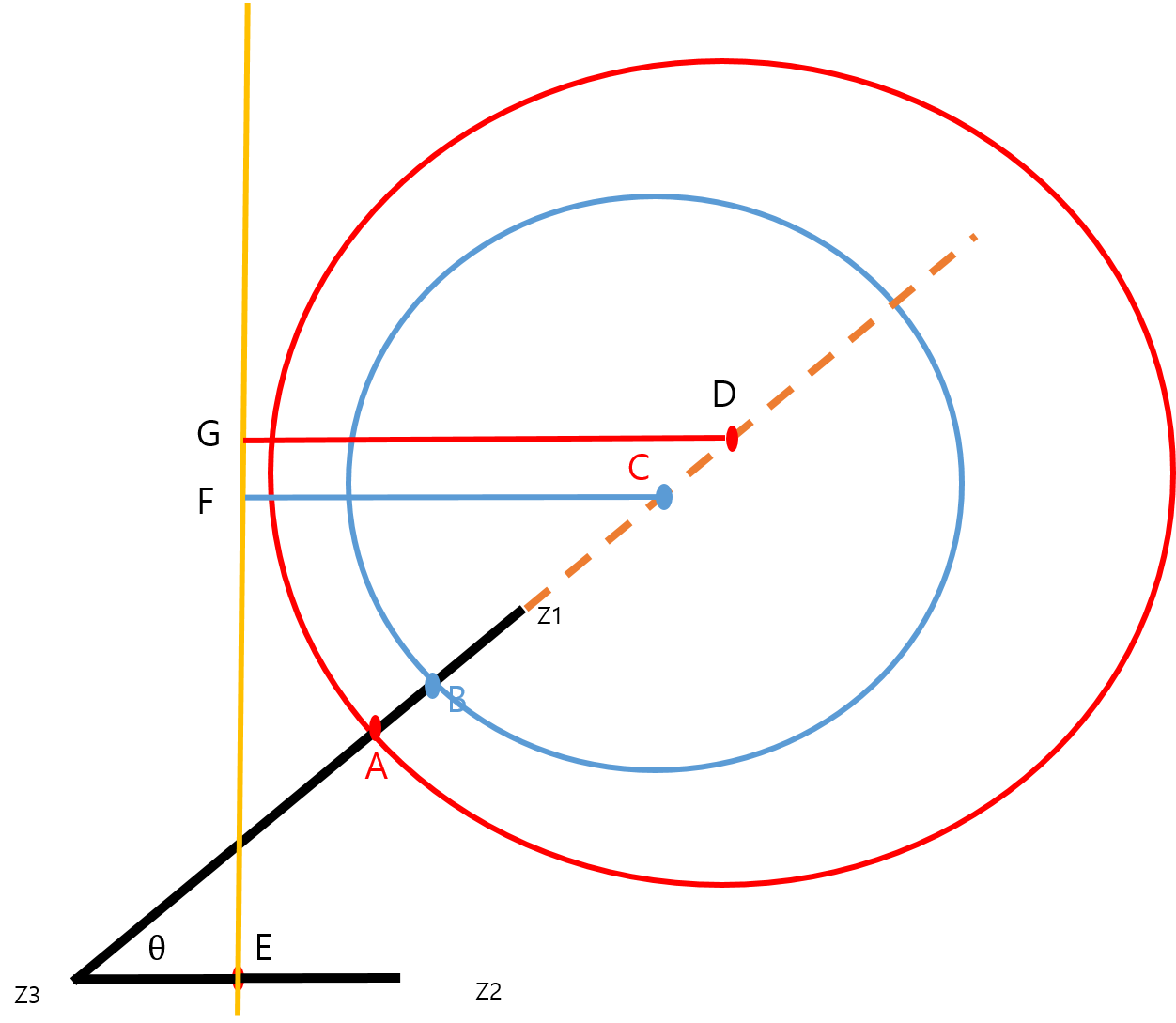, height=4cm,width=4cm,clip=2cm}}
\end{minipage}
\begin{minipage}[t]{8cm}
\centerline{\epsfig{file=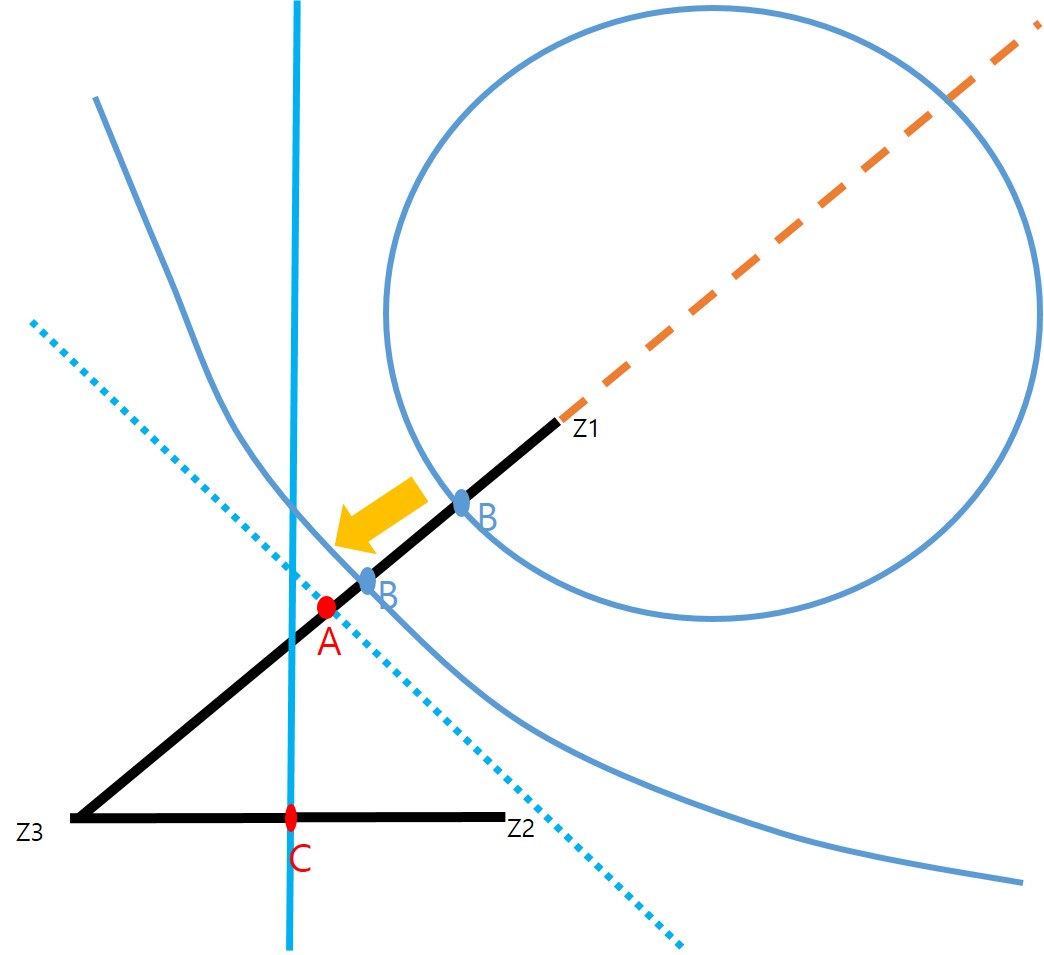,height=4cm,width=4cm,clip=2cm}}
\end{minipage}
\begin{center}
(a)\qquad\qquad\qquad\qquad\qquad\qquad\qquad\qquad\qquad\qquad(b)
\end{center}
\caption{Cases in which the Apollonius circle along $\overline{z_1 z_3}$ does not meet the perpendicular line bisecting $\overline{z_2 z_3}$: (a) Bigger and smaller Apollonius circles along $\overline{z_1 z_3}$. (b) As $d_1$ increases to $d_2$, the Apollonius circle goes to the dotted perpendicular line bisecting $\overline{z_1 z_2}$ and eventually meets the perpendicular line bisecting $\overline{z_2 z_3}$.  }
\label{fig:Dk}
\end{figure}

Next, we consider case (\ref{eq:th1g}). 
\begin{lem}\label{le:th4}
Let  $d_3 > d_2 \ge d_1$ and $|r_{23}-r_{13}| > |b_{23}-b_{13}|$. Then 
\begin{equation}\label{eq:leth4}
r_{23} > r_{13}    \mbox{ and } d_2 > d_1
\end{equation}
\end{lem}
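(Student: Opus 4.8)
The plan is to collapse the geometric nesting hypothesis $|r_{23}-r_{13}|>|b_{23}-b_{13}|$ into a single algebraic inequality among the pairwise distances, and then to extract both conclusions from it using nothing more than the triangle inequality in the triangle $z_1z_2z_3$. Throughout I would keep in mind that, because $d_3>d_2\ge d_1$, both Apollonius circles have $d_3$ as their ``large'' ratio, so the two centers and radii share the convenient denominators $d_3^2-d_1^2$ and $d_3^2-d_2^2$.

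First I would translate so that $z_3$ is the origin. Then $b_{i3}=\frac{d_3^2}{d_3^2-d_i^2}z_i$ and $r_{i3}=\frac{d_i d_3 d_{i3}}{d_3^2-d_i^2}$ for $i=1,2$, with $\|z_i\|=d_{i3}$ and $z_1\cdot z_2=\tfrac12(d_{13}^2+d_{23}^2-d_{12}^2)$. Expanding $|b_{23}-b_{13}|^2$, subtracting $(r_{23}-r_{13})^2$, and pulling out the common factor $d_3^2/[(d_3^2-d_1^2)(d_3^2-d_2^2)]$, the numerator collapses to a perfect square, giving
\[
|b_{23}-b_{13}|^2-(r_{23}-r_{13})^2=\frac{d_3^2\bigl[d_3^2 d_{12}^2-(d_2 d_{13}-d_1 d_{23})^2\bigr]}{(d_3^2-d_1^2)(d_3^2-d_2^2)}.
\]
Since $d_3>d_2\ge d_1$ makes the denominator positive, the hypothesis is exactly the clean criterion
\[
d_3\,d_{12}<\bigl|d_2 d_{13}-d_1 d_{23}\bigr| \qquad(\ast).
\]
Verifying this identity is the computational heart of the argument; everything afterward is elementary inequality manipulation.

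With $(\ast)$ available, the claim $d_2>d_1$ follows by contradiction. As we already know $d_2\ge d_1$, suppose $d_2=d_1$. Then $(\ast)$ reads $d_3 d_{12}<d_1|d_{13}-d_{23}|$, whereas the triangle inequality gives $|d_{13}-d_{23}|\le d_{12}$ and $d_1<d_3$ gives $d_1|d_{13}-d_{23}|\le d_1 d_{12}<d_3 d_{12}$, a contradiction. Hence $d_2>d_1$.

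For $r_{23}>r_{13}$, I would cancel the common factor $d_3$ and reduce the claim to $d_1 d_{13}(d_3^2-d_2^2)<d_2 d_{23}(d_3^2-d_1^2)$, then split on the sign of $d_1 d_{13}-d_2 d_{23}$. If $d_1 d_{13}\le d_2 d_{23}$, the inequality is immediate from $0<d_3^2-d_2^2<d_3^2-d_1^2$. The remaining case $d_1 d_{13}>d_2 d_{23}$ forces $d_{13}>d_{23}$ (because $d_1\le d_2$), so $d_2 d_{13}-d_1 d_{23}>0$, and $(\ast)$ together with $d_{12}\ge d_{13}-d_{23}$ yields $(d_3-d_2)d_{13}<(d_3-d_1)d_{23}$; multiplying this by $d_1(d_3+d_2)$ and using $d_1(d_3+d_2)\le d_2(d_3+d_1)$ (equivalent to $d_1\le d_2$) delivers precisely the reduced inequality. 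I expect this last case to be the main obstacle, since there $r_{23}>r_{13}$ is \emph{not} a consequence of $d_2>d_1$ alone — one genuinely has to route the nesting hypothesis $(\ast)$ through the triangle inequality, and a quick numerical check confirms that $r_{23}<r_{13}$ can occur once $(\ast)$ is dropped.
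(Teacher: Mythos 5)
Your proof is correct, and it takes a genuinely different route from the paper's. The paper argues synthetically: for $r_{23}>r_{13}$ it places $z_3,b_{23},b_{23}^{in},b_{13},b_{13}^{in}$ on two rays from $z_3$, shows $|z_3-b_{13}^{in}|\ge|z_3-b_{23}^{in}|$, and derives a violation of the triangle inequality in the triangle $z_3\,b_{23}\,b_{13}$; for $d_2>d_1$ it computes $r_{23}-r_{13}$ and $|b_{23}-b_{13}|$ explicitly in the special case $d_1=d_2$ and contradicts the triangle inequality for $d_{12},d_{13},d_{23}$. You instead establish the single identity
\begin{equation*}
|b_{23}-b_{13}|^2-(r_{23}-r_{13})^2=\frac{d_3^2\bigl[d_3^2 d_{12}^2-(d_2 d_{13}-d_1 d_{23})^2\bigr]}{(d_3^2-d_1^2)(d_3^2-d_2^2)},
\end{equation*}
which I have checked (expand with $z_3$ at the origin and $z_1\cdot z_2=\tfrac12(d_{13}^2+d_{23}^2-d_{12}^2)$; it does collapse as claimed), and reduce the hypothesis to $d_3 d_{12}<|d_2 d_{13}-d_1 d_{23}|$. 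Your $d_2>d_1$ step is then exactly the paper's second half seen as the specialization $d_1=d_2$ of this criterion, so your identity unifies the two halves; your case split for $r_{23}>r_{13}$ ($d_1d_{13}\le d_2d_{23}$ versus $d_1d_{13}>d_2d_{23}$) checks out, including the final multiplication by $d_1(d_3+d_2)\le d_2(d_3+d_1)$. What your approach buys is independence from the figure: the paper's argument tacitly relies on the pictured ordering of $b_{13}^{in}$ and $b_{13}$ along the ray from $z_3$, whereas your computation needs no such positional facts, and the Ptolemy-like quantity $d_3 d_{12}-|d_2 d_{13}-d_1 d_{23}|$ would equally characterize the internal-tangency condition in Theorem~\ref{th:th1}. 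What the paper's proof buys is that it stays within the Apollonius-circle vocabulary used throughout Section~3. One small point of care in your write-up: the strictness in the case $d_1 d_{13}\le d_2 d_{23}$ uses $d_3^2-d_2^2<d_3^2-d_1^2$, i.e.\ $d_1<d_2$, so the two conclusions must be proved in the order $d_2>d_1$ first — which is the order you state, but worth making explicit; degenerate configurations with $d_{12}=0$ or $d_{23}=0$ should also be excluded, as the paper implicitly does.
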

\begin{proof}
Suppose that $r_{13}\ge r_{23}$. Then, by the assumption 
\begin{equation}\label{eq:leth4_1}
r_{13} >  r_{23} + |b_{23}-b_{13}|,
\end{equation}
points $z_3,b_{23}, b_{23}^{in}, b_{13},$ and $b_{13}^{in}$ respectively correspond to $O,A,B,C,D$ in Fig. \ref{fig:Inner} (a). 
In addition, the above equation (\ref{eq:leth4_1}) becomes
\begin{equation}\label{eq:leth4_2}
|DC|>|BA|+|CA|,
\end{equation} 
\begin{eqnarray}\label{eq:leth4_3}
|OD| &=& \frac{d_3}{d_3+d_1}d_{13} = \left(\frac{d_3}{d_1}-1\right)r_{13} 
\nonumber\\
         &\ge& \left(\frac{d_3}{d_2}-1\right)r_{13}  > \left(\frac{d_3}{d_2}-1\right)r_{23}  =  \frac{d_3}{d_3+d_2}d_{23} 
 \nonumber\\ 
         &=& |OB|.
\end{eqnarray}
Therefore, using (\ref{eq:leth4_2}) and (\ref{eq:leth4_3}), we have
\begin{eqnarray*}
|OC|=|OD|+|DC| > |OB| + |BA| + |CA| = |OA| + |CA|,
\end{eqnarray*}
which contradicts the fact that $|OC|,|OA|,$ and $|CA|$ are three sides of a triangle. Therefore, we proved $r_{23}> r_{13}$. \\
Suppose that $d_2=d_1$. In this case, $A=E$ and $C=F$ in Fig. \ref{fig:Inner} (a). Since
\begin{equation*}
r_{23}-r_{13} = \frac{ d_2 d_3 }{d_3^2-d_2^2}(d_{23}-d_{13})  <  \frac{ d_3 ^2}{d_3^2-d_2^2}(d_{23}-d_{13}) 
\end{equation*}
and
\begin{equation*} 
 |b_{23}-b_{13}|=|AC|= \frac{ d_3^2 d_{12} }{d_3^2 - d_2^2}  
\end{equation*} 
we have from the assumption $r_{23}-r_{13}>|b_{23}-b_{13}|$ 
\begin{equation*}
  d_{23}-d_{13} > d_{12},  
\end{equation*}
which contradicts the fact that $d_{12}, d_{13},$ and $,d_{23}$ are the lengths of the sides of a triangle. Thus, we proved $d_2\neq d_1$ and $d_2>d_1$.
\end{proof}
\begin{figure}
\begin{minipage}[t]{6cm}
\centerline{\epsfig{file=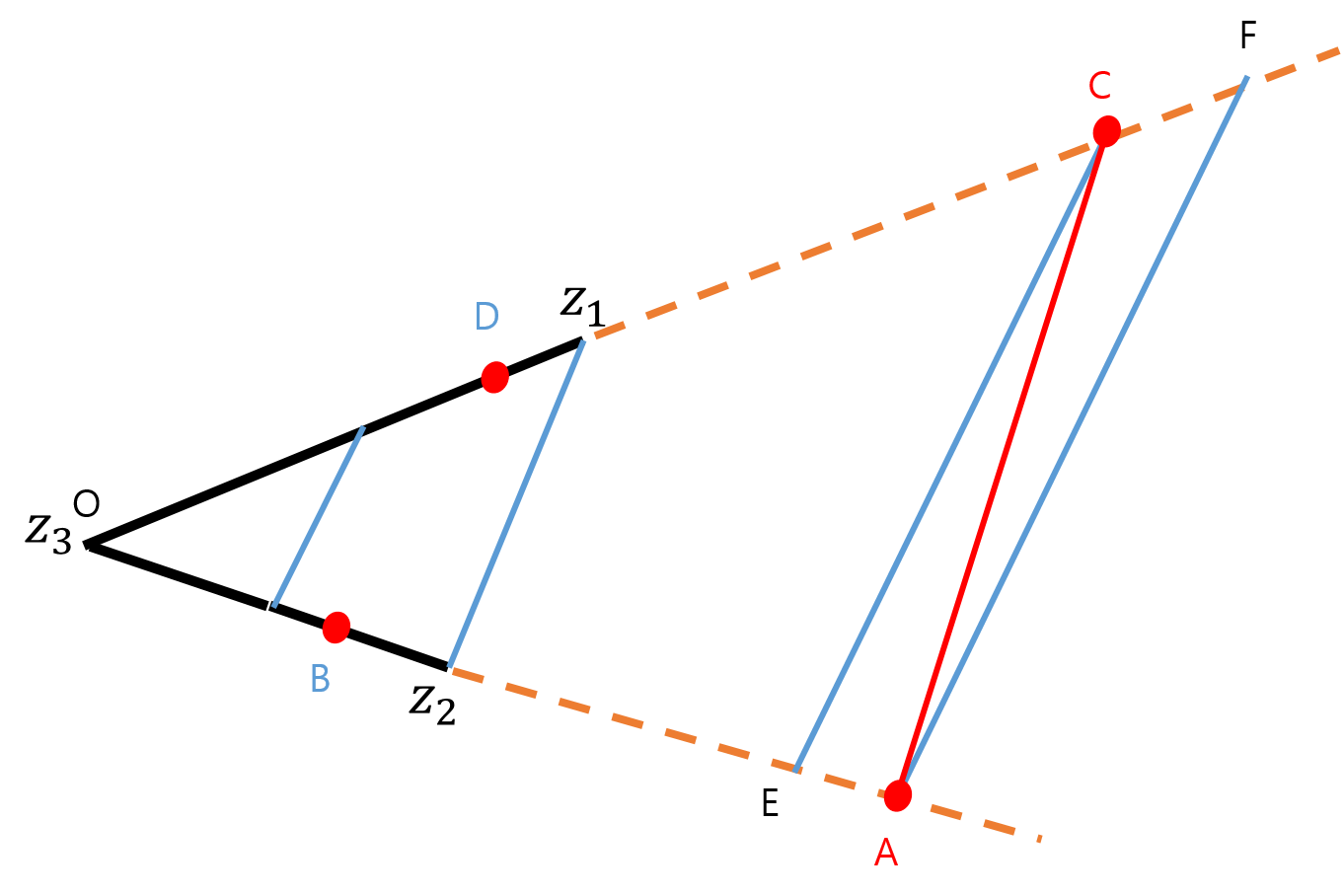, height=4cm,width=4cm,clip=2cm}}
\end{minipage}
\begin{minipage}[t]{4cm}
\centerline{\epsfig{file=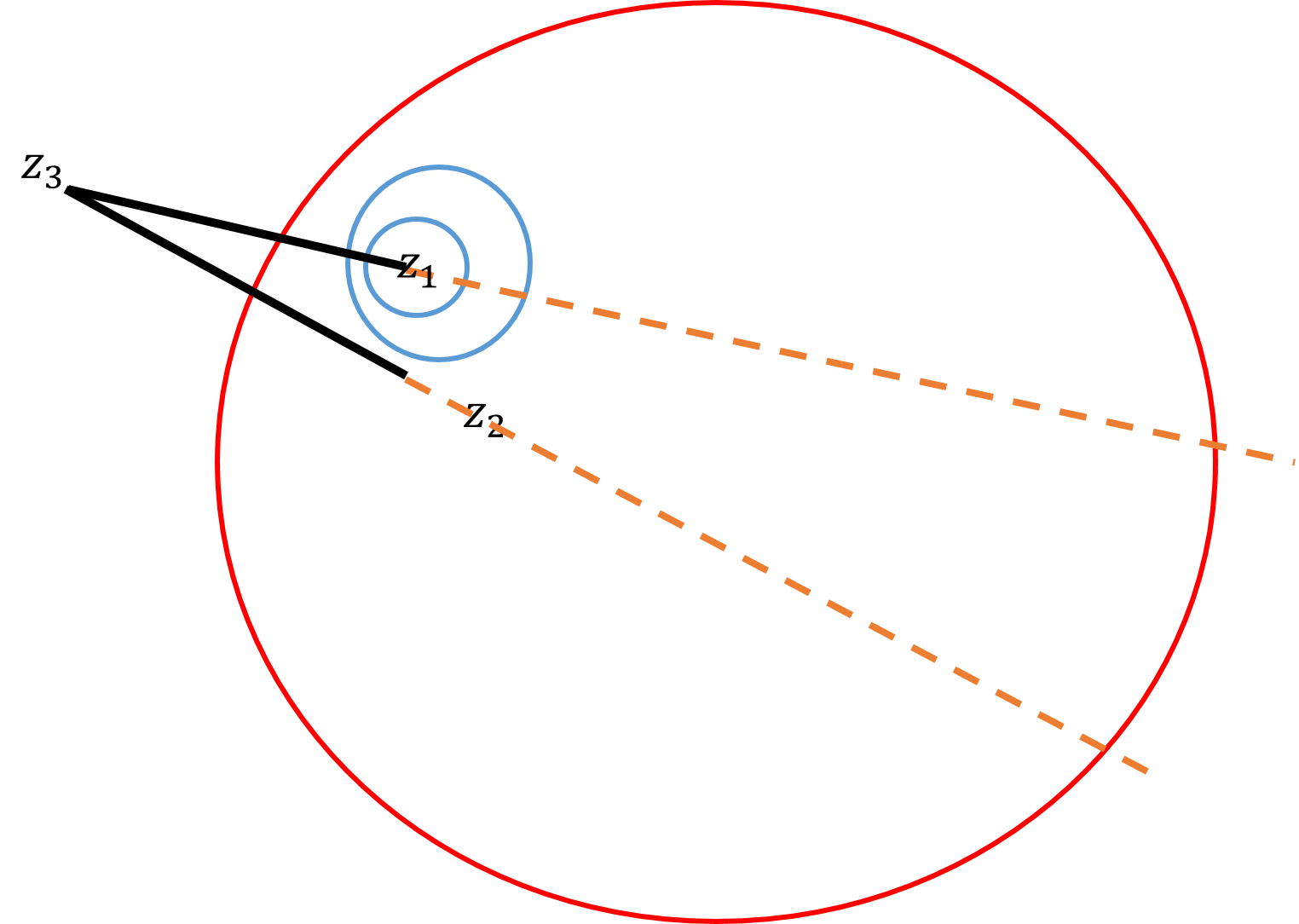, height=4cm,width=4cm,clip=2cm}}
\end{minipage}
\begin{minipage}[t]{6cm}
\centerline{\epsfig{file=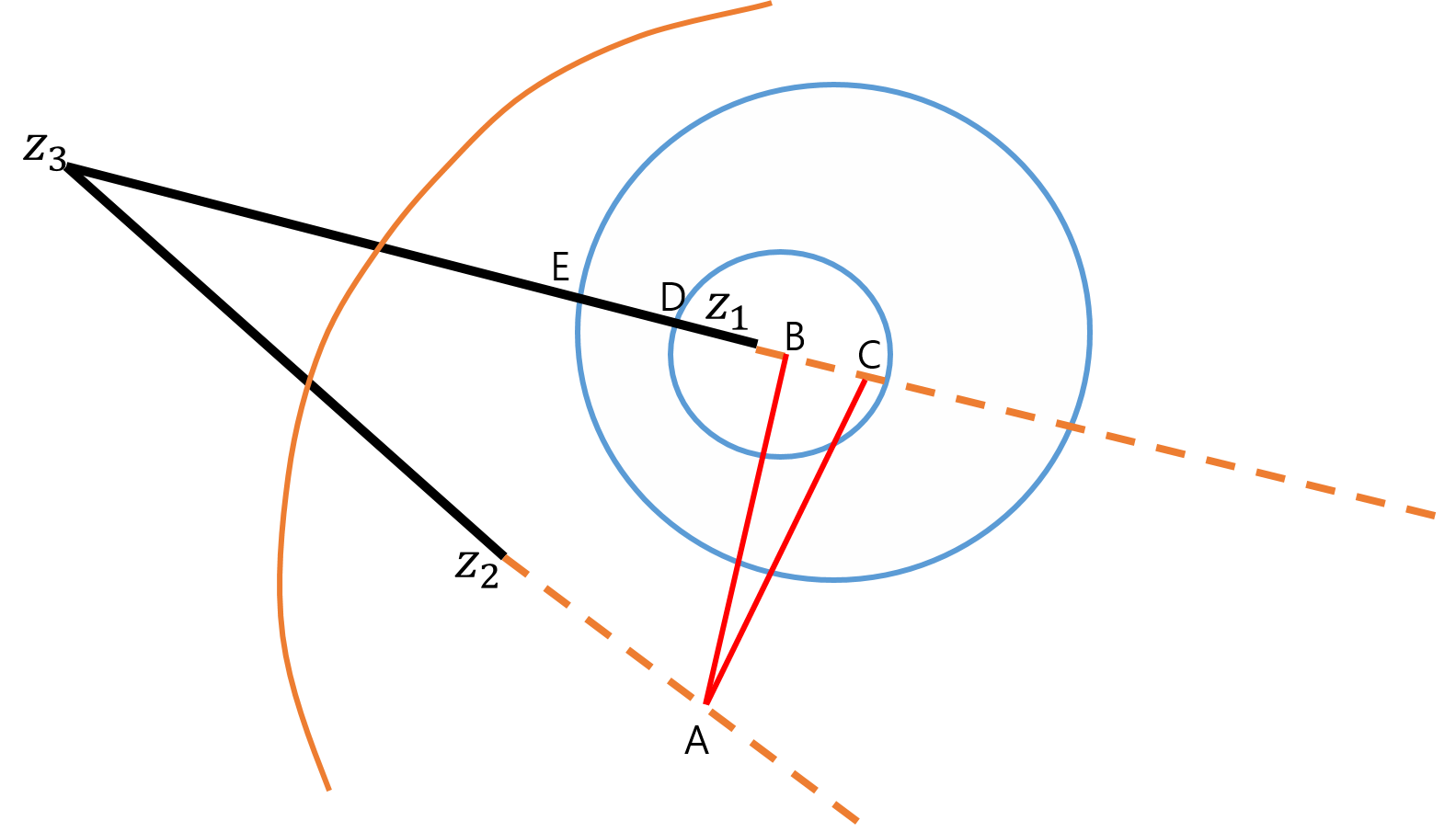, height=4cm,width=4cm,clip=2cm}}
\end{minipage}
\begin{center}
(a)\qquad\qquad\qquad\qquad\qquad\qquad\qquad(b)\qquad\qquad\qquad\qquad\qquad\qquad\qquad(c)
\end{center}
\caption{Cases in which the Apollonius circle along $\overline{z_2 z_3}$ contains the Apollonius circle along $\overline{z_1 z_3}$.
(a) Diagram for Lemma \ref{le:th4}. (b,c) Diagram for Theorem \ref{th:th4} where (c) is an enlargement of (b).}
\label{fig:Inner}
\end{figure}
Under condition (\ref{eq:th1g}) like in Fig.  \ref{fig:Inner} (b), we will increase $d_1$. Let
$$E(k) := ( r_{23} - r_{13}(k,1))- |b_{23}-b_{13}(k,1)|.  $$ 
By Lemma \ref{le:th4}, $E(k)$ is the same as $| r_{23} - r_{13}(k,1)|- |b_{23}-b_{13}(k,1)|$ and the value $k_E := d_2/d_1$ is greater than 1,
\begin{theorem}\label{th:th4}
Suppose $ d_3 > d_2 > d_1$ and $ E(1)>0$.  Then $E(k)$ is a decreasing function for $k\in[1,k_E)$ and  
there is a $k\in[1,k_E)$ such that $E(k)=0$ and $d_3 > d_2 > kd_1$. 
\end{theorem}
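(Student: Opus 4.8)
The plan is to follow the same two-part template used for Theorem~\ref{th:th2}: first show that $E$ is strictly decreasing on $[1,k_E)$, and then produce a sign change by evaluating $E$ at the right endpoint and invoking the intermediate value theorem. Throughout I would reparametrize by $t:=kd_1$, which ranges over $[d_1,d_2)$ as $k$ ranges over $[1,k_E)$; since $t<d_2<d_3$ the quantity $d_3^2-t^2$ stays strictly positive, so the explicit formulas
\[ r_{13}(k,1)=\frac{t\,d_3\,d_{13}}{d_3^2-t^2}, \qquad b_{13}(k,1)=z_3+\frac{d_3^2}{d_3^2-t^2}\,(z_1-z_3) \]
show that the center $C(t):=b_{13}(k,1)$ travels along the line $z_1z_3$ away from $z_3$, while the radius $\rho(t):=r_{13}(k,1)$ is a finite, smooth function of $t$. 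In particular $E$ extends continuously to the closed interval $[1,k_E]$, which is exactly what the intermediate value theorem will need.

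For the monotonicity I would write, for $t_1<t_2$,
\[ E(k_2)-E(k_1)=\bigl(|b_{23}-C(t_1)|-|b_{23}-C(t_2)|\bigr)-\bigl(\rho(t_2)-\rho(t_1)\bigr), \]
and bound the first bracket by the reverse triangle inequality, $|b_{23}-C(t_1)|-|b_{23}-C(t_2)|\le|C(t_1)-C(t_2)|$. It then suffices to prove that the displacement of the center is strictly dominated by the growth of the radius, i.e.\ $|C(t_1)-C(t_2)|<\rho(t_2)-\rho(t_1)$. Both sides are explicit: a short computation gives $|C(t_1)-C(t_2)|=\dfrac{d_3^2(t_1+t_2)(t_2-t_1)\,d_{13}}{(d_3^2-t_1^2)(d_3^2-t_2^2)}$ and $\rho(t_2)-\rho(t_1)=\dfrac{d_3(d_3^2+t_1t_2)(t_2-t_1)\,d_{13}}{(d_3^2-t_1^2)(d_3^2-t_2^2)}$, so the desired inequality collapses to $d_3(t_1+t_2)<d_3^2+t_1t_2$, i.e.\ to $(d_3-t_1)(d_3-t_2)>0$, which is immediate since $t_1,t_2<d_2<d_3$. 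Hence $E(k_2)<E(k_1)$ and $E$ is strictly decreasing.

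For the existence of a zero I would evaluate $E$ at $k_E$, where $t=d_2$. Using $b_{23}=\dfrac{d_3^2z_2-d_2^2z_3}{d_3^2-d_2^2}$ and $C(d_2)=\dfrac{d_3^2z_1-d_2^2z_3}{d_3^2-d_2^2}$, one finds $b_{23}-C(d_2)=\dfrac{d_3^2}{d_3^2-d_2^2}\,z_{21}$, hence $|b_{23}-C(d_2)|=\dfrac{d_3^2\,d_{12}}{d_3^2-d_2^2}$, and therefore
\[ E(k_E)=\frac{d_3}{d_3^2-d_2^2}\Bigl[d_2\,(d_{23}-d_{13})-d_3\,d_{12}\Bigr]. \]
Since $z_1,z_2,z_3$ are the vertices of a nondegenerate triangle we have $d_{23}-d_{13}<d_{12}$, and $0<d_2<d_3$; both facts force the bracket to be negative, so $E(k_E)<0$. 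Together with the hypothesis $E(1)>0$, the continuity of $E$ on $[1,k_E]$, and the strict monotonicity just proved, the intermediate value theorem yields a unique $k^\ast\in(1,k_E)$ with $E(k^\ast)=0$; and $k^\ast<k_E=d_2/d_1$ gives $k^\ast d_1<d_2$, i.e.\ $d_3>d_2>k^\ast d_1$, as claimed.

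The step I expect to be the genuine obstacle is the monotonicity, and specifically the inequality $|C(t_1)-C(t_2)|<\rho(t_2)-\rho(t_1)$. The tempting shortcut --- splitting $E$ into the two pieces $r_{23}-r_{13}(k,1)$ and $-|b_{23}-b_{13}(k,1)|$ and arguing that each is monotone --- fails, because as the center slides along the line $z_1z_3$ its distance to the fixed point $b_{23}$ is a convex function of the parameter and typically first decreases and then increases, so $|b_{23}-b_{13}(k,1)|$ need not be monotone. The remedy is precisely the reverse triangle bound above, which trades the non-monotone distance for the displacement $|C(t_1)-C(t_2)|$ and then shows the radius outgrows it. Everything else --- continuity, the endpoint sign, and the closing intermediate-value argument --- is routine once this inequality is in hand.
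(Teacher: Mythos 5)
Your proof is correct, and while the monotonicity half lands on the same key inequality as the paper, the existence half takes a genuinely different and arguably cleaner route. For monotonicity, the paper also reduces everything to ``the radius $r_{13}(k,1)$ grows faster than the center $b_{13}(k,1)$ moves'': in Fig.~\ref{fig:Inner}(c) it writes $E(k_2)-E(k_1)=-|CE|+|BD|-|AC|+|AB|$, uses the collinearity and ordering of $C,B,D,E$ on the line $z_1z_3$ to get $-|CE|+|BD|=-|ED|-|BC|$, and finishes with the triangle inequality $|AB|\le |AC|+|BC|$; your reverse-triangle-inequality bound $|b_{23}-C(t_1)|-|b_{23}-C(t_2)|\le |C(t_1)-C(t_2)|$ plays exactly the role of that triangle inequality, and your algebraic verification that $(d_3-t_1)(d_3-t_2)>0$ replaces the paper's observation that $-|ED|<0$. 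Your diagnosis that $|b_{23}-b_{13}(k,1)|$ alone need not be monotone is accurate and is precisely why both arguments must couple the two terms. For existence, the paper argues geometrically: the circle along $\overline{z_2z_3}$ contains the segment $\overline{z_1z_2}$, the Apollonius circle along $\overline{z_1z_2}$ degenerates to the perpendicular bisector of $\overline{z_1z_2}$ as $k\to k_E$, and a point common to the circles along $\overline{z_1z_2}$ and $\overline{z_2z_3}$ automatically lies on the circle along $\overline{z_1z_3}$. You instead evaluate $E$ in closed form at the endpoint $k_E$ and obtain $E(k_E)=\frac{d_3}{d_3^2-d_2^2}[d_2(d_{23}-d_{13})-d_3d_{12}]<0$ from $d_{23}-d_{13}\le d_{12}$ and $d_2<d_3$, then invoke the intermediate value theorem. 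Your version is more self-contained (it avoids introducing the third family of circles and the limiting argument), yields uniqueness of the zero for free from strict monotonicity, and makes the needed continuity on the closed interval explicit; the paper's version is more visual and explains geometrically \emph{why} the circles must eventually meet. Both are valid; note only that your endpoint computation quietly uses $d_{12}>0$, i.e.\ distinct sensors, which is implicit throughout the paper.
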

\begin{proof}
Suppose that $1<k_1<k_2<k_E$. 
Let $b_{23}, b_{13}(k_1,1), b_{13}(k_2,1), b_{13}^{in}(k_1,1),$ and $b_{13}^{in}(k_2,1)$ be $A,B,C,D,$ and $E$ in Fig. (\ref{fig:Inner}) (c), respectively. Then, using $|AB|,|BC|,$ and $|CA|$'s which are the lengths of the three sides of a triangle $\Delta ABC$, we have 
\begin{eqnarray*}
E(k_2)-E(k_1) &=& -r_{13}(k_2,1)+r_{13}(k_1,1)-|b_{23}-b_{13}(k_2,1)|+|b_{23}-b_{13}(k_1,1)|\\
                           &=&-|CE|+|BD|-|AC|+|AB|
                           =-|ED|-|BC|-|AC|+|AB|\\
                           &=&-|ED|-(|BC|+|CA|-|AB|)
                           <0.
\end{eqnarray*}
Thus, $E$ is a decreasing function for $k\in [1, k_E)$.\\
The Apollonius circle along $\overline{z_2 z_3}$ contains $z_2$ and $z_1$, and therefore also contains the line joining $z_2$ and $z_1$. If $k$ goes to $k_E$ from below, the Aplollonius circle along $\overline{z_1 z_2}$ converges to the perpendicular line bisecting $\overline{z_1 z_2}$. Therefore, two Apollonius circles along $\overline{z_1 z_2}$ and $\overline{z_2 z_3}$ meet for some $k\in[1,k_E)$,  meaning that two Apollonius circles along $\overline{z_1 z_3}$ and $\overline{z_2 z_3}$ also meet at the same value $k$. Therefore, there is some $k\in[1, k_E)$ such that $E(k)=0$.  
\end{proof}

Finally, let us consider case (\ref{eq:th1h}). In this case, $d_1$ and $d_2$ increase at the same rate upto $k_F=\frac{d_3}{d_2}$ times.  Let
$$ F(k_1,k_2) :=   |b_{23}(k_1,1)-b_{13}(k_2,1)| - ( r_{23}(k_1,1) + r_{13}(k_2,1)). \text{ and } F(k) := F(k,k).$$ 
\begin{theorem}\label{th:th3}
Suppose $ d_3 > d_2 \ge d_1$ and $ F(1)>0 $. $F(k)$ is a decreasing function with respect to $k\in[1,k_F)$.
If $d_2=d_1$, there is a $k\in[1,k_F)$ such that $F(k)=0$ and $d_3>kd_2=kd_1$.
\end{theorem}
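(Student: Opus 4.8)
The plan is to establish Theorem \ref{th:th3} in two stages, mirroring the structure of the proofs of Theorems \ref{th:th2} and \ref{th:th4}: first prove monotonicity of $F(k)$ on $[1,k_F)$, and then use a geometric limit argument at the endpoint $k_F$ to force a zero of $F$ somewhere in the interval. In case (\ref{eq:th1h}) the two Apollonius circles along $\overline{z_2 z_3}$ and $\overline{z_1 z_3}$ are externally separated, so $F(1)>0$ measures the gap between the circles. When $d_2=d_1$, scaling both $d_1$ and $d_2$ by the same factor $k$ keeps the two circles symmetric under the scaling, which is what makes $F(k)=F(k,k)$ the natural quantity and what guarantees the order $d_3>kd_2=kd_1$ is preserved up to $k_F=d_3/d_2$.

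For the monotonicity step I would fix $1\le k_1<k_2<k_F$ and track how the two centers and two radii move as $k$ increases. As $k$ grows, each $d_i$ is replaced by $kd_i<d_3$, so both Apollonius circles (centered at $b_{23}(k,1)$ and $b_{13}(k,1)$) swell and their centers drift. The goal is to show that the center-to-center distance $|b_{23}(k,1)-b_{13}(k,1)|$ grows more slowly than the sum of the radii $r_{23}(k,1)+r_{13}(k,1)$, so that the gap $F(k)$ strictly decreases. Following the triangle-inequality bookkeeping used in the proof of Theorem \ref{th:th4}, I would label the four relevant points $b_{23}(k_1,1),b_{23}(k_2,1),b_{13}(k_1,1),b_{13}(k_2,1)$ (together with the internal division points $b_{ij}^{in}$, which lie on the circles along the line through $z_3$) and express the increment $F(k_2)-F(k_1)$ as a signed sum of segment lengths. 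Using that $b_{13}(k,1)$, $b_{13}^{in}(k,1)$, and the relevant $z$'s are collinear along $\overline{z_1 z_3}$, and similarly along $\overline{z_2 z_3}$, the change in each radius equals a segment on its own axis while the change in center distance is controlled by the triangle inequality applied to the triangle formed by the two center-displacements. The upshot I expect is an expression of the form $F(k_2)-F(k_1)=-(\text{radius growth})+(\text{bounded center drift})<0$, giving strict decrease.

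For the existence of a zero, I would use the endpoint behavior at $k\to k_F^-$. As in Theorem \ref{th:th2}, as $kd_i\to d_3$ the corresponding Apollonius circle degenerates to the perpendicular bisector of the relevant side; here, with $d_2=d_1$ and both scaled by $k$, both circles degenerate simultaneously to the perpendicular bisectors of $\overline{z_1 z_3}$ and $\overline{z_2 z_3}$, which necessarily intersect (the circumcenter exists since $z_1,z_2,z_3$ form a triangle). Hence the separated circles must come into contact before $k$ reaches $k_F$, i.e. $F(k)\to$ a nonpositive limit (in fact the intersecting-bisectors configuration forces $F<0$ near $k_F$). Combining $F(1)>0$, the strict monotonic decrease, and the nonpositive endpoint behavior, continuity of $F$ on $[1,k_F)$ yields a unique $k$ with $F(k)=0$; the order $d_3>kd_2=kd_1$ follows because $kd_2<k_Fd_2=d_3$.

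The main obstacle I anticipate is the monotonicity computation, not the existence argument. Unlike Theorem \ref{th:th2} (where only $d_1$ varies, so just one circle moves) and Theorem \ref{th:th4} (where the circles share the collinear structure along $z_3$), here \emph{both} circles change size and position simultaneously, so the bookkeeping must account for the coupled motion of two centers along two different axes through $z_3$. The delicate point is bounding the growth of $|b_{23}(k,1)-b_{13}(k,1)|$: I would need the angle $\angle z_1 z_3 z_2$ to enter (via a $\cos\theta$ factor, as in Theorem \ref{th:th2}) so that the center drift does not outpace the combined radius growth. Verifying that this holds uniformly on $[1,k_F)$, and handling the degenerate sub-case where one of the scaled distances equals $d_3$ from below (so a center $b_{ij}(k,1)$ blows up), is where the argument will require the most care; the theorem's restriction to $d_2=d_1$ for the existence claim suggests the author likewise found the symmetric case the tractable one, with the asymmetric $d_2>d_1$ contact behavior left only as monotonicity.
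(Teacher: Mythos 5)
Your overall architecture (strict monotonicity of $F$ on $[1,k_F)$, then a degeneration argument at $k\to k_F^-$ plus the intermediate value theorem) is the paper's architecture, and your existence step is essentially the paper's: with $d_2=d_1$ the relevant Apollonius circles degenerate to perpendicular bisectors as $k\to k_F^-$, and since perpendicular bisectors of two sides of the triangle meet (at the circumcenter), the circles must come into contact for some $k<k_F$. (The paper phrases this with the perpendicular bisector of $\overline{z_1z_2}$, on which the solution must lie because $kd_1=kd_2$, meeting the degenerating circle along $\overline{z_2z_3}$; your version with the two bisectors of $\overline{z_1z_3}$ and $\overline{z_2z_3}$ is an equivalent limit argument.)

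The gap is in the monotonicity step, and it is exactly the point you yourself flag as the "main obstacle": you propose to control the coupled motion of both centers and both radii simultaneously via a $\cos\theta$-type estimate, and you do not carry that computation out. The paper avoids this entirely by a telescoping decomposition through the two-variable function $F(k_1,k_2)$ that it defines immediately before the theorem for precisely this purpose:
\begin{equation*}
F(k_2)-F(k_1)\;=\;\bigl[F(k_2,k_2)-F(k_1,k_2)\bigr]\;+\;\bigl[F(k_1,k_2)-F(k_1,k_1)\bigr].
\end{equation*}
In the first bracket only the circle along $\overline{z_2z_3}$ moves (the one along $\overline{z_1z_3}$ is frozen at parameter $k_2$), and in the second bracket only the circle along $\overline{z_1z_3}$ moves; each bracket is then shown to be negative by the same one-circle-at-a-time segment bookkeeping already used in Theorems \ref{th:th2} and \ref{th:th4}. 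Without this reduction (or a completed direct estimate of the coupled drift of $|b_{23}(k,1)-b_{13}(k,1)|$ against the growth of $r_{23}(k,1)+r_{13}(k,1)$), the claim that $F$ is decreasing is asserted but not proved, and the existence conclusion that rests on $F(1)>0$ together with monotone decrease is left hanging. A minor additional remark: the degeneracy you worry about, where a center $b_{ij}(k,1)$ blows up, cannot occur on the open interval $[1,k_F)$ since $kd_1\le kd_2<k_Fd_2=d_3$ there; it only appears in the limit, where it is the intended mechanism rather than an obstruction.
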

\begin{proof}
Let  $1< k_1 < k_2 < k_F $.  Note that 
$$ F(k_2) - F(k_1) = [F(k_2)- F(k_1,k_2)]  +  [F(k_1,k_2) - F(k_1)]. $$
We can prove that 
$$ F(k2) - F(k_1,k_2) <0   \mbox{ and } F(k_1,k_2) - F(k_1) < 0 $$ 
in a similar manner to the proofs of Theorem \ref{th:th2} and Theorem \ref{th:th4}.\\
Suppose $d_3>d_2=d_1$. Since $kd_2=kd_1$, the solution should lie on the perpendicular line bisecting $
\overline{z_1 z_2}$. If $k$ goes to $k_F$. the Apollonius circle along $\overline{z_2 z_3}$ goes to the perpendicular line bisectiong $\overline{ z_2 z_3}$. Thus, there must  exist a $k\in[1,k_F)$ such that a perpendicular line bisects $\overline{z_1 z_2}$ and the Apollonius circle along $\overline{z_2 z_3}$.  
\end{proof}
 If $d_2\neq d_1$, there might or might not be a $k\in[1,k_F)$ such that $F(k)=0$. If there is such a $k$, we have $d_3 > kd_2 > kd_1$.
 Otherwise, we should try the case $d_3=d_2'(=k_F d_2)>d_1'(=k_F d_1)$ and increase $d_1'$ upto $d_3$, which is the case in Theorem \ref{th:th2}.
 
To summing up, our strategy for finding $x$ is shown in Fig. \ref{fig:flowchart}.
\begin{figure}
\begin{minipage}[t]{16cm}
\centerline{\epsfig{file=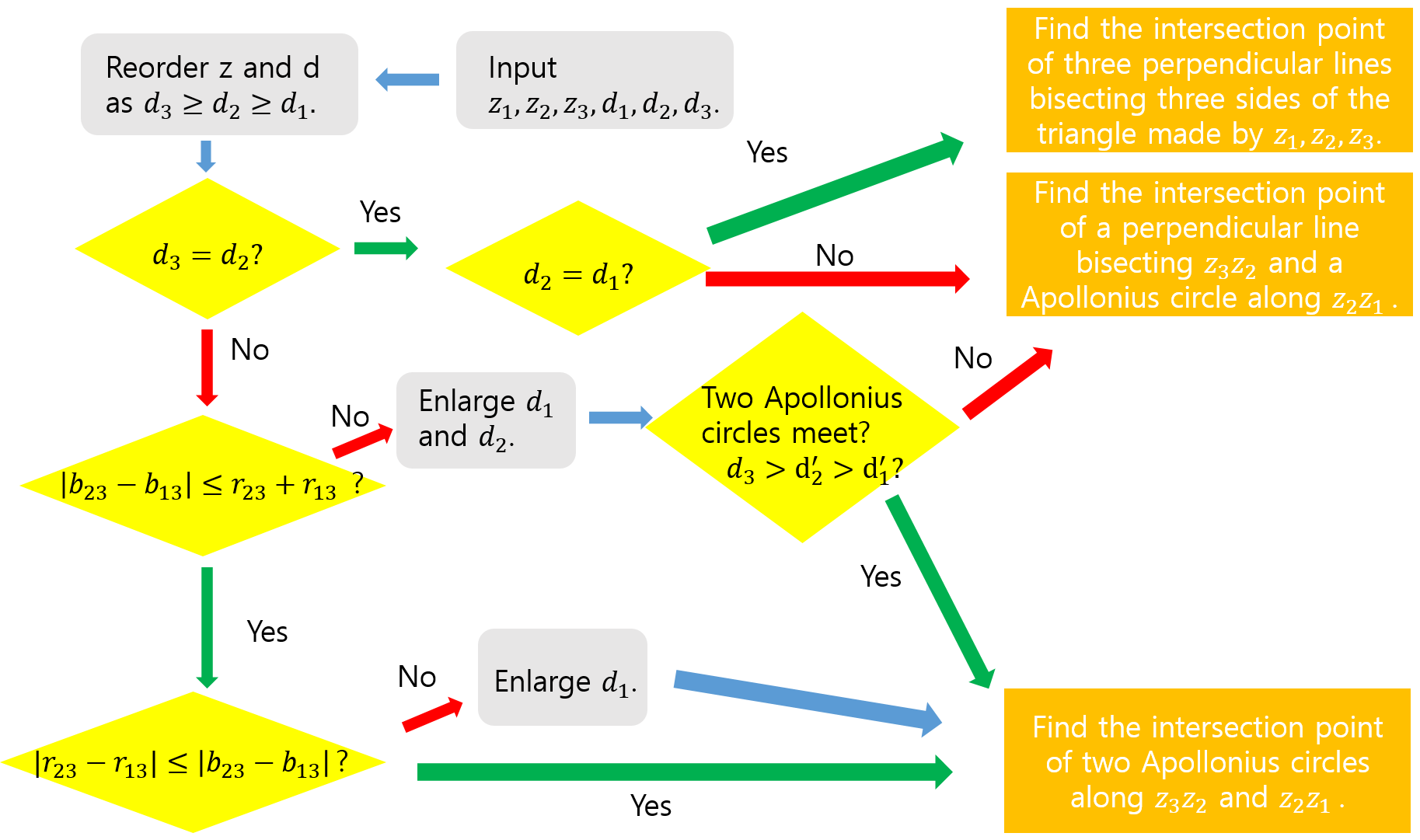, height=10cm,width=16cm,clip=2cm}}
\end{minipage}
\caption{Flow chart of the proposed algorithm}
\label{fig:flowchart}
\end{figure}
We excluded the case in which $d_3=d_2=d_1$ and $z_1,z_2,z_3$ are colinear, as there is no solution to (\ref{eq:main}) in such a case.

\section{Numerical test}
In this section, we numerically tested the proposed algorithm in two ways: numerical illustration of all cases is shown in Fig. \ref{fig:flowchart} and approximation error analysis is used for 50 random samples with respect to the increasing multiplication error. 
 
First, we randomly generated $z_i$ on $(0,1)\times(0,1)$ and $r_i$ on $(0,1)$ for $i=1,2,3$ with uniform probability density, then reordered 
them to satisfy $d_3\ge d_2\ge d_1$. We classified the cases as follows 
\begin{itemize}
\item{Case 003 : $d_3=d_2=d_1$ : There is a solution if $z_1,z_2,$ and $z_3$ are not colinear by (\ref{eq:th1a}).}
\item{Case 013: $d_3=d_2> d_1$ and $\frac{|z_{23}\cdot(b_{13}-b_{23}^{in})|}{d_{23}}\le r_{13} $ : There is a solution  by (\ref{eq:th1b}) and (\ref{eq:th1d}).}
\item{Case 012013: $d_3=d_2> d_1$ and $\frac{|z_{23}\cdot(b_{13}-b_{23}^{in})|}{d_{23}}> r_{13} $: Find a minimum $k$ such that $d_3=d_2>d_1'(=kd_1)$ satisfies a solution. The existence of minimum $k$ is proved in Theorem \ref{th:th2}.}
\item{Case 113: $d_3>d_2\ge d_1$ and $|r_{23}-r_{13}|\le|b_{23}-b_{13}|\le r_{23}+r_{13} $: There is a solution by (\ref{eq:th1c}) and (\ref{eq:th1e}).} 
\item{Case 112+113: $d_3>d_2\ge d_1$ and $|b_{23}-b_{13}|> r_{23}+r_{13} $:  Find a minimum $k$ such that $d_3>d_2'(=kd_2)\ge d_1'(=kd_1)$ satisfies a solution. If $d_2=d_1$, the existence of such a $k$ is proved in Theorem \ref{th:th3}. If there does not exist such a $k$, let $d_3=d_2'> d_1' $ and go to Case 112+013. }
\item{Case 112+013: The condition $d_3>d_2\ge d_1$ and $|b_{23}-b_{13}|> r_{23}+r_{13} $ changed into the condition 
 $d_3=d_2'>d_1'$ and $\frac{|z_{23}'\cdot(b_{13}'-{b_{23}^{in}}')|}{d_{23}'}> r_{13}' $ :  Find a minimum $l$ such that $d_3 = d_2' > ld_1'$ satisfies a solution. The existence of such $l$ is proved in Theorem \ref{th:th2}. }
\item{Case 112-113:  $d_3>d_2\ge d_1$ and $|r_{23}-r_{13}|>|b_{23}-b_{13}| $:   Find a minimum $k$ such that $d_3=d_2>d_1'(=k d_1) $ satisfies a solution. The existence of $k$ is proved in Theorem \ref{th:th4} and Lemma \ref{le:th4}. }
\end{itemize} 

All cases explained above are shown in Fig. \ref{fig:Cases}.  We computed  the multiplicative error ratio $\epsilon$, the normalized range measurement data $d/d(3)$, the normalized distance from the suggested solution $\frac{|z-x|}{|z_3-x|}$, and the Relative Ratio Error(RRE), and the results are displayed In Table \ref{tab:Cases}. RRE is the ratio error (\ref{eq:same}) normalized by $1+\epsilon_3$, defined as follows:
$$  RRE =\frac{ |\epsilon_3-\epsilon_2|+|\epsilon_3-\epsilon_1|}{1+\epsilon_3}
 ,$$
  which measures how close the three values $\epsilon_1,\epsilon_2,$ and $\epsilon_3$ are to each other.

\begin{figure}
\begin{minipage}[t]{6cm}
\centerline{\epsfig{file=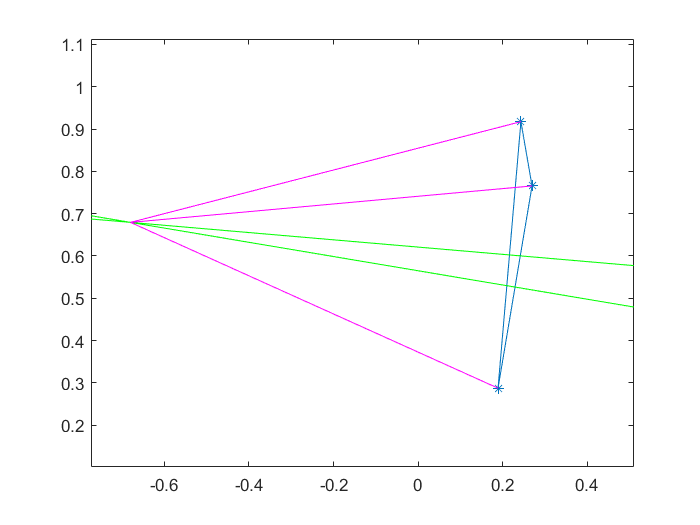, height=4cm,width=4cm,clip=2cm}}
\end{minipage}
\begin{minipage}[t]{4cm}
\centerline{\epsfig{file=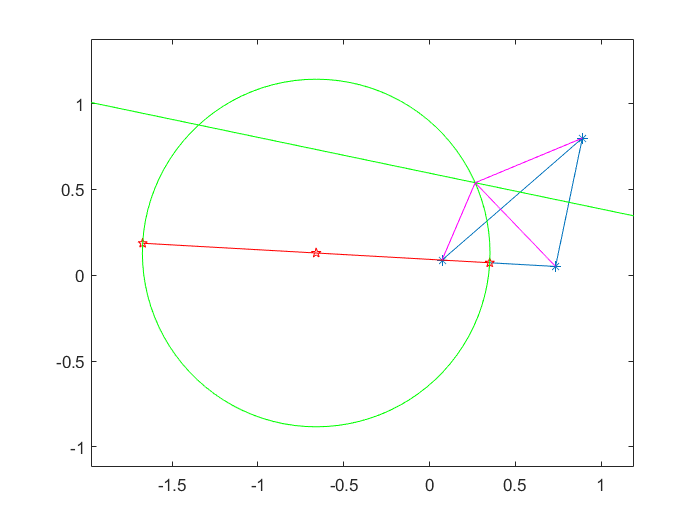, height=4cm,width=4cm,clip=2cm}}
\end{minipage}
\begin{minipage}[t]{6cm}
\centerline{\epsfig{file=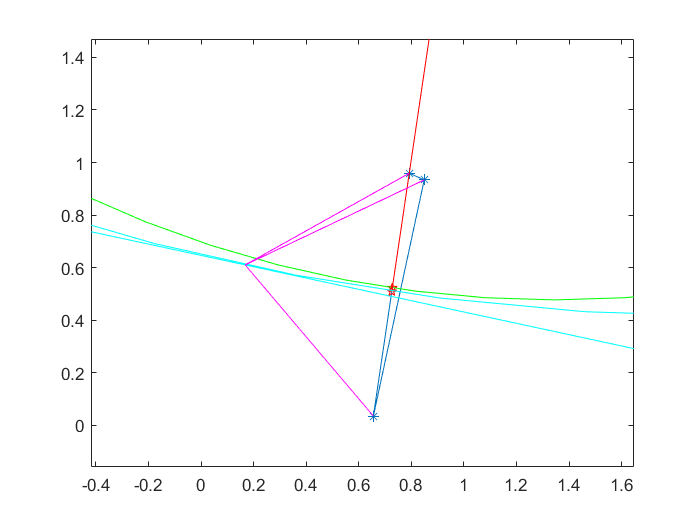, height=4cm,width=4cm,clip=2cm}}
\end{minipage}
\begin{center}
(a)\qquad\qquad\qquad\qquad\qquad\qquad\qquad(b)\qquad\qquad\qquad\qquad\qquad\qquad\qquad(c)
\end{center}
\begin{minipage}[t]{6cm}
\centerline{\epsfig{file=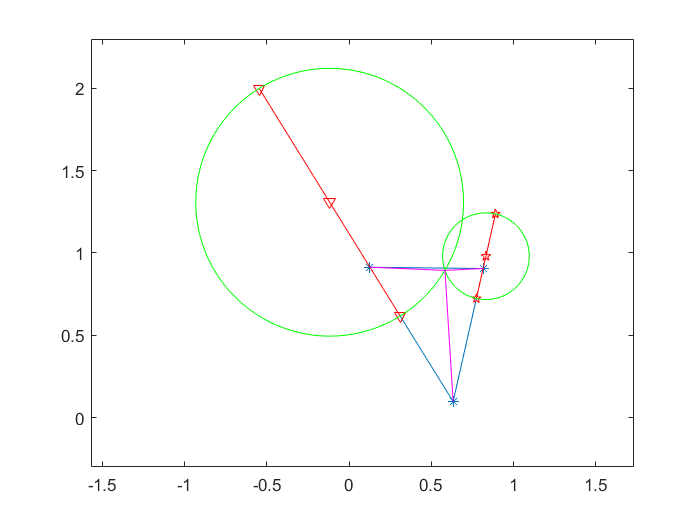, height=4cm,width=4cm,clip=2cm}}
\end{minipage}
\begin{minipage}[t]{4cm}
\centerline{\epsfig{file=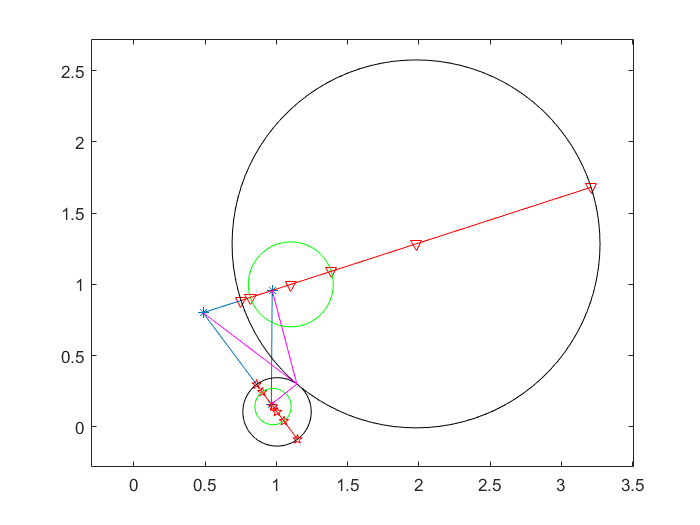, height=4cm,width=4cm,clip=2cm}}
\end{minipage}
\begin{minipage}[t]{6cm}
\centerline{\epsfig{file=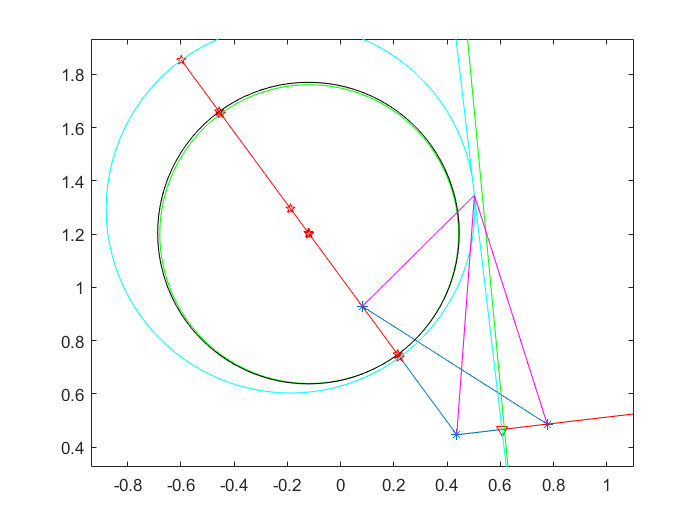, height=4cm,width=4cm,clip=2cm}}
\end{minipage}
\begin{center}
(d)\qquad\qquad\qquad\qquad\qquad\qquad\qquad(e)\qquad\qquad\qquad\qquad\qquad\qquad\qquad(f)
\end{center}
\begin{minipage}[t]{6cm}
\centerline{\epsfig{file=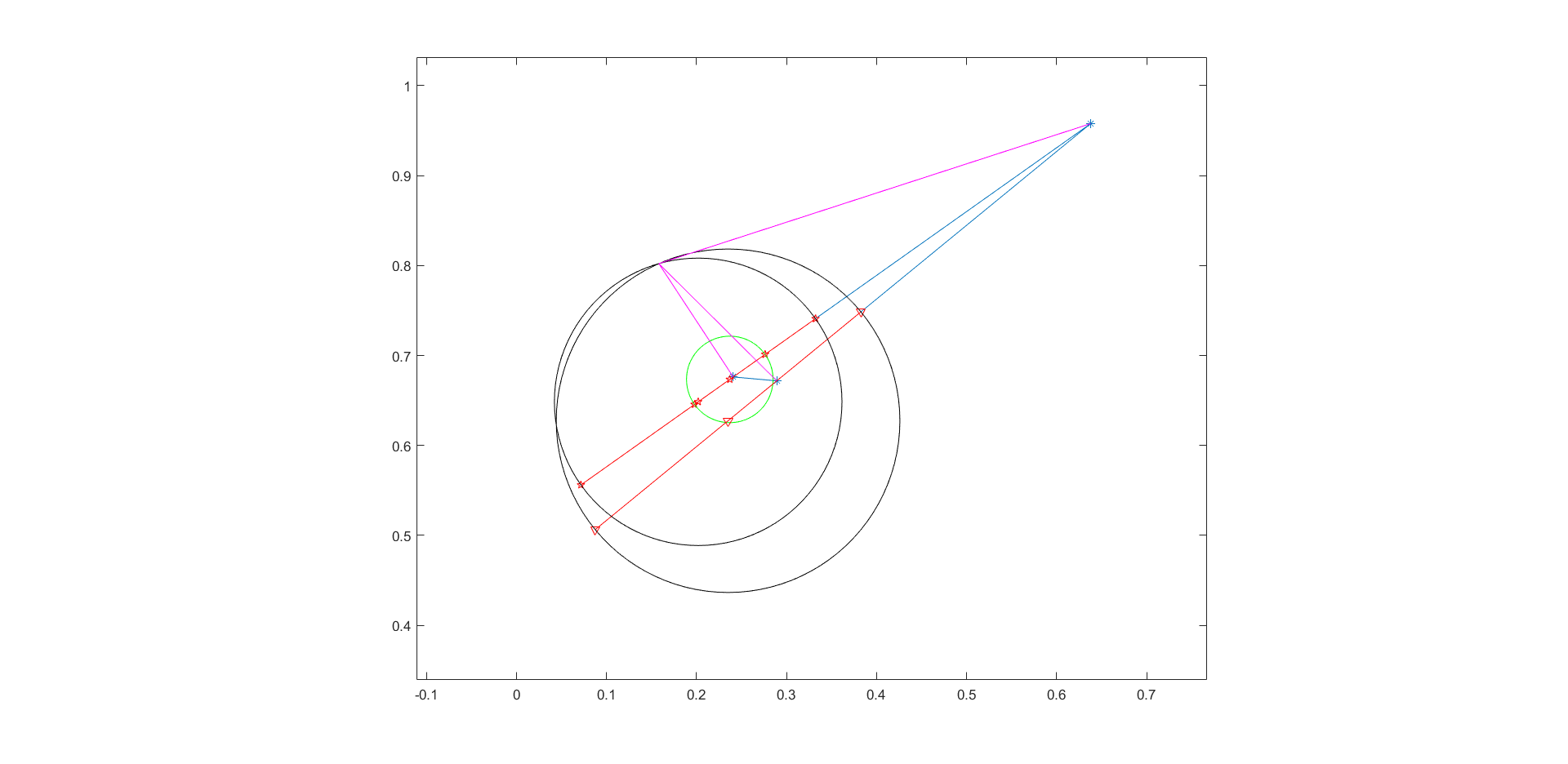, height=4cm,width=7cm,clip=2cm}}
\end{minipage}
\begin{center}
(g)\qquad\qquad\qquad\qquad\qquad\qquad\qquad\qquad\qquad\qquad\qquad\qquad\qquad\qquad
\end{center}
\caption{Process of finding the source location $x$ for call cases  (a)003, (b)013, (c)012013, (d)113, (e)112+113, (f)112+013, and (g)112-113 from Fig. \ref{fig:flowchart}. Magenta lines are the lines from $x$ to the three measurement points $z$.}
\label{fig:Cases}
\end{figure}

\begin{table}
\begin{centering}
\begin{tabular}{|ccccc|}
\hline
Case &   $\epsilon$&  $\frac{d^0}{d_3^0}$& $\frac{|z-x|}{|z_3-x|}$ &RRE   \\
\hline \hline
003 & (-0.2829   -0.2829   -0.2829) &(1.000,1.000,1.0000)& (1.0000,1.0000,1.0000)&1.5483e-16\\
\hline
013 & (0.3943    0.3943    0.3943) &(0.7250    1.0000    1.0000)&0.7250    1.0000    1.0000)&3.1850e-16\\
\hline
012013 &(-0.0509    0.0049    0.0049)&(0.8957,1.0000,1.0000)&(0.9484,1.0000,1.0000)&0.0555\\
\hline
113 &(0.1991,0.1991,0.1991)&(0.2909,0.5712,1.0000)&(0.2909,0.5712,1.0000)&4.4443e-15\\
\hline
112+113 &(-0.3797   -0.3797    0.1067)&(0.1550    0.4606    1.0000)&(0.2765    0.8219    1.0000)&0.8791\\
\hline
112+013 &-0.4829   -0.4360   -0.4335)&(0.5998    0.9956    1.0000)&(0.6570    1.0000    1.0000)&0.0915\\
\hline
112-113 &( -0.5489    0.3784    0.3784)&(0.0978    0.3665    1.0000)&( 0.2989    0.3665    1.0000)&0.6728\\
\hline
\end{tabular}
\caption{Values $\epsilon, \frac{d^0}{d_3^0}, \frac{|z-x|}{|z_3-x|}$ are computed for all cases shown in Fig. \ref{fig:flowchart}. }
\label{tab:Cases}
\end{centering}
\end{table}

To find a solution of (\ref{eq:main}) where $\epsilon$ is given by Gaussian noise, we have computed the mean and standard deviation for  50 samples with multiplicative Gaussian noise from 0\% to 100\%.  The simulated data are obtained throuth the following procedures: First, let us assume that the source location $x^0=(0.5,0.5)$ is fixed. Second, make uniform random measurement points $z_j\in [0,1]\times[0,1], j=1,2,3$. Third, add the multiplicative Gaussian noises with $100*\eta \%, \eta  \in [0, 1]$ to the distances from $x^0$ to $z_j$, resulting in  $d_j = \max(0.01,||z_j-x^0||(1+\eta*G(0,1))$, where $G(0,1)$ is a standard Gaussian noise. Fourth, approximate $x^0$ using the proposed algorithm. Fifth, calculate the mean and standard devation of 50 approximation errors $||x -x^0 ||$, where $x$ is the approximation of $x^0$, for given $\eta$.  The step size for $\eta$ is chosen to be 2\%. The computed errors with respect to $\eta$ are shown in Fig.\ref{fig:Error} (a). Removing the four highest standard deviation points, we can see the approximately increasing behavior shown in Fig. \ref{fig:Error} (b).  

\begin{figure}
\begin{minipage}[t]{8cm}
\centerline{\epsfig{file=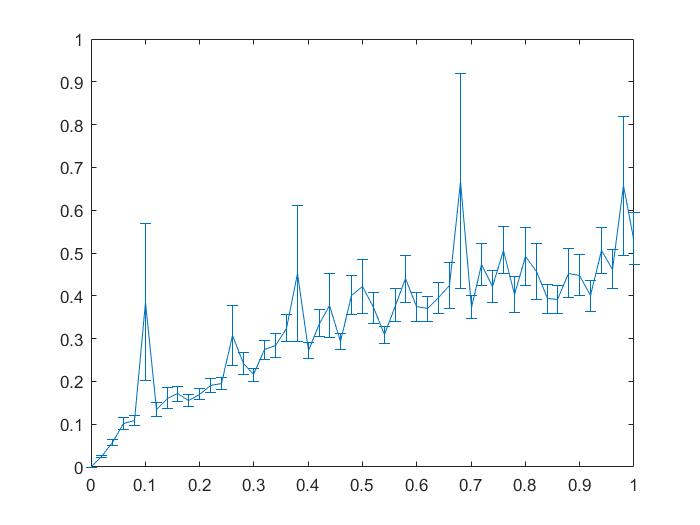, height=4cm,width=6cm,clip=2cm}}
\end{minipage}
\begin{minipage}[t]{8cm}
\centerline{\epsfig{file=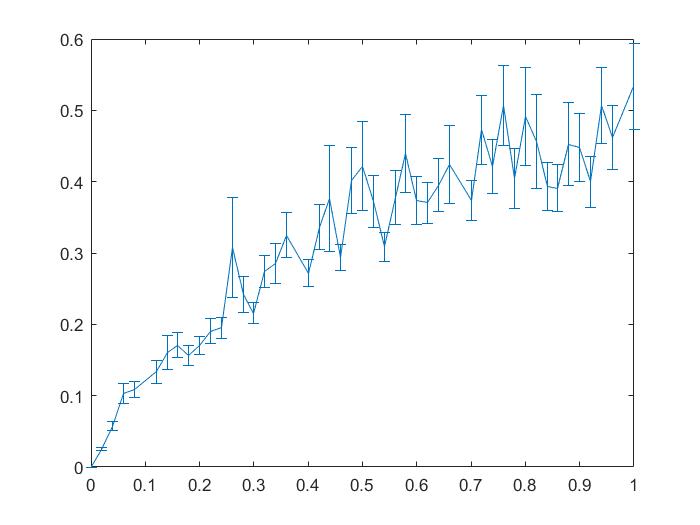, height=4cm,width=6cm,clip=2cm}}
\end{minipage}
\begin{center}
(a)\qquad\qquad\qquad\qquad\qquad\qquad\qquad\qquad\qquad\qquad\qquad(b)
\end{center}
\caption{Means and standard deviations of the errors for the 50 samples used to find the source with respect to multiplicative noise ratio $\eta$ from $0\%$ to $100\%$ using the proposed algorithm. Standard deviation is represented as the vertical bar. (a) Step size is $2\%$. (b) Four points with the highest standard deviation from the graph in (a) are are eliminated. }
\label{fig:Error}
\end{figure}

\section{Conclusions}
In this paper, the theoretical background on locating a singular source from three range measurements with multiplicative noise was exploited. When the multiplicative noise was the same for the points of three measurement data, the equivalent  condition for the existence of the singular source was presented and proved using the idea of Apollonius circles. When there existed solutions, there were one or two. When two solutions existed, we chose the closest point whose distance to $z_3$ was more similar to the longest distance $d_3$ as a possible approximation of the source.  When no solution existed for the same $\epsilon$, we proposed an algorithm with which to find the best approximation with respect to RRE by controlling $\epsilon$. The algorithm preserves the distance length order and minimizes RRE; that is to say, it minimizes the ratio difference among $d_j/|z_j-x|, j=1,2,3$.  Numerical examples for all cases in the algorithm are shown including the measurement triangle, Apollonius circle, and perpendicular line bisecting one of the sides, as well as the approximated solution. Finally, we showed that, as the multiplicative noise ratio increased from $0\%$ to $100\%$, the mean and standard deviations' for the 50 samples increased asymptotically.

\centerline{{\bf Acknowledgements}}
This work was written when the author visited Southern Illinois University Edwardsville, and the author discussed the work with professor Jun Liu. This work is supported by the Basic Research Program through the National Research Foundation of Korea(NRF) funded by the Ministry of Science and ICT (NRF-2017R1A2B4004943).


\end{document}